\newif\ifIEEE
\theoremstyle{plain}
\newtheorem{theorem}{Theorem}
\newtheorem{lemma}{Lemma}
\newtheorem{corollary}{Corollary}
\newtheorem{claim}{Lemma}
\theoremstyle{definition}
\newtheorem{definition}{Definition}
\author{Arsalan Sharifnassab\thanks{Department of Electrical Engineering, Sharif University of Technology, Tehran, Iran 
  (\email{a.sharifnassab@gmail.com}, \email{golestani@sharif.edu}).}
\and John N. Tsitsiklis\thanks{Laboratory for Information and Decision Systems, Electrical Engineering and Computer Science Department, MIT, Cambridge MA, 02139, USA 
  (\email{jnt@mit.edu}).}
\and S. Jamaloddin Golestani\footnotemark[2]}
\newcommand{\hide}[1]{}
\renewcommand{\tilde}[1]{\widetilde{#1}}
\newcommand{\grad}{\nabla}
\newcommand{\R}{\mathbb{R}}
\newcommand{\Ltwo}[1]{\big\lVert #1 \big\rVert}
\newcommand{\actionset}{\mathcal{S}}
\newcommand{\D}{D}
\newcommand{\conv}{\textrm{Conv}}
\newcommand{\region}{\mathcal{R}}
\newcommand{\degion}{\mathcal{D}}
\newcommand{\ball}{\mathcal{B}}
\newcommand{\Phit}{{\Phi}}
\newcommand{\ones}{\mathds{1}}
\newcommand{\I}{\mathcal{I}}
\newcommand{\pp}{\mathcal{P}}
\def\int{\rm{int}}
\long\def\red#1{{\color{red}#1}}
\long\def\del#1{{\color{green}\sout{[#1]}}}
\long\def\pur#1{{\color[rgb]{.8,0,.8}#1}}
\long\def\oli#1{{\color[rgb]{0,.8,.8}[#1]}}
\newcommand\redsout{\bgroup\markoverwith{\textcolor{red}{\rule[0.5ex]{2pt}{.5pt}}}\ULon}
\long\def\dela#1{{\color{blue}\redsout{[#1]}}}
\long\def\comm#1{{\color{blue} [#1]}}
\long\def\old#1{}
\title{Nonexpansive Piecewise Constant Hybrid Systems are Conservative
\ifIEEE
\else
\thanks{\today. This work was partially done while A. Sharifnassab was a visiting student at the Laboratory for Information and Decision Systems, MIT, Cambridge MA, 02139, USA.}
\fi}
\begin{document}

\maketitle

\begin{abstract}
Consider a partition of $\R^n$ into finitely many polyhedral regions $\degion_i$  and associated drift vectors $\mu_i\in\R^n$. We study ``hybrid'' dynamical systems whose trajectories have a constant drift, $\dot x=\mu_i$, whenever $x$ is in the interior of the $i$th region $\degion_i$, and behave consistently on the boundary between different regions. 
Our main result asserts that if such a system is \emph{nonexpansive} (i.e., if the Euclidean distance between any pair of  trajectories is a nonincreasing function of time), then the system must be 
\emph{conservative}, i.e., its trajectories are the same as the trajectories of the negative subgradient 
flow associated with a potential function. Furthermore, this potential function 
is necessarily convex, and is linear on each of the regions~$\degion_i$. 
We actually establish a more general version of this result, by making seemingly weaker assumptions on the dynamical system of interest.

\old{A continuous-time dynamical system is \emph{nonexpansive} if the Euclidean distance between any pair of its trajectories is a nonincreasing function of time; it is \emph{conservative} if its trajectories move along the gradient of a convex function; and it is \emph{finite-partition} if there is a partition of the domain into a finite number of regions, over each of which the drift is a constant vector.
While not every nonexpansive dynamical system is conservative, we show  that every nonexpansive finite-partition system is conservative.
More concretely, we show that under some minimal well-formedness assumptions, every nonexpansive finite-partition system is essentially 
the subgradient field of a \emph{convex} potential function, which is furthermore piecewise linear with finitely many pieces.}
\end{abstract}

\ifIEEE
\else
\begin{keywords}
Dynamical systems, nonexpansive systems, conservative systems, subgradient flow, piecewise constant system, piecewise linear potential 
\end{keywords}
\fi

\section{\bf Introduction} \label{sec:introduction}

In this paper we study ``hybrid'' dynamical systems with the following special structure.  Consider a partition of $\R^n$ into finitely many polyhedral regions $\degion_i$, and associated drift vectors $\mu_i\in\R^n$. We focus on  systems whose trajectories obey $\dot x=\mu_i$ whenever $x$ is in the interior of the $i$th region $\degion_i$, and refer to them 
 as \emph{polyhedral hybrid systems}.

Polyhedral systems arise in many different contexts in which the dynamics are relatively simple, 
with
a finite set of possible control actions,and different actions   applied in different regions of the state space. 
Examples include communication networks \cite{TassE92,Neel10}, processing systems \cite{RossBM15}, manufacturing systems, and inventory management \cite{PerkS98,Meyn08}.
More concretely, this type of systems describes
the fluid model dynamics of several policies for real-time job scheduling \cite{Neel10} that choose at each time a service vector out of a finite set of possible such vectors, based on the current system state (i.e., the queue lengths), as in 
 the celebrated Max-Weight algorithm \cite{TassE92} and its generalizations \cite{GeorNT06}.
See 
Fig.~\ref{fig:application finite part is subdiff} for a simple example. 

\begin{figure*} 
\begin{center}
\includegraphics[width = .45\textwidth]{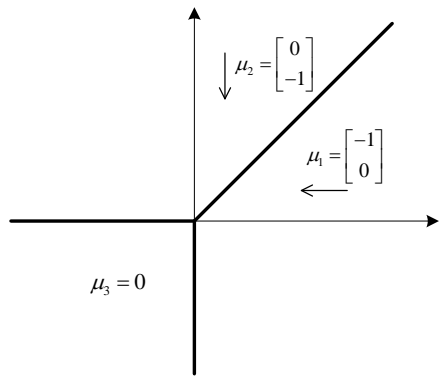}
\ifIEEE
\vspace{-1cm}
\fi
\end{center}
\caption{The dynamics of a network with two  queues and a server that serves a longest queue at each time, with arbitrary time-sharing in case of a tie. Moreover, if both  queues are negative, the server goes idle.
There are three polyhedral regions, and there is a constant drift vector, $\mu_i$, in the interior of each region. Furthermore at the boundary between two (or three) regions, the direction of motion is a convex combination of the drifts $\mu_i$ in the adjacent regions. Such a system is nonexpansive and can also be described as
the negative subgradient flow associated with the piecewise linear convex function $\Phi(x)=\max\big(-\mu_1^Tx,\, -\mu_2^Tx,\, 0\big)$.   
}
\label{fig:application finite part is subdiff}
\end{figure*}

Our main result asserts that if a polyhedral hybrid system  is \emph{nonexpansive} 
(i.e., if the Euclidean distance between any pair of trajectories is a nonincreasing function of time) then the system must be 
\emph{conservative}, i.e., its trajectories are the same as the trajectories of the negative subgradient 
flow associated with a potential function. Furthermore, this potential function can be chosen to be piecewise linear, with finitely many linear pieces, and convex; finally, the potential function is linear on each of the regions $\degion_i$.

Once we establish that the system is conservative, the particular properties of the potential function are not too surprising. For example, it is not hard to show that if a dynamical system is conservative and nonexpansive, then the underlying potential function must be convex.\footnote{We provide the proof for the special case of a conservative system $\dot x= -\nabla \Phi(x)$ associated with a \emph{differentiable} potential function $\Phi$.
Consider a pair $x$ and $y$ of points in $\R^n$, and  let  $x(\cdot)$ and $y(\cdot)$ be two trajectories initialized at $x$ and $y$, respectively. 
Then, the nonexpansive property implies that 
\begin{equation}
2\cdot (x -y )^T \big(\grad \Phi(x) -\grad \Phi(y)  \big)  \, = \, - \frac{d}{dt} \Ltwo{x(t)-y(t)}^2\, 
\geq\, 0.
\end{equation}
Thus, $\Phi$ has an increasing directional derivative over the line segment that starts at $y$ and ends at $x$.
Since $x$ and $y$ are arbitrary,  it follows that $\Phi$ is convex.}
The converse also turns out to be true: the (negative) subgradient flow associated with a  convex potential function is nonexpansive.
\footnote{For the special case where the convex  potential function $\Phi(\cdot)$ is differentiable, the argument is as follows. For any pair  $x(\cdot)$ and $y(\cdot)$ of trajectories and any time $t$, it follows from  the convexity of $\Phi(\cdot)$  that
\begin{equation}
 \frac{d}{dt} \Ltwo{x(t)-y(t)}^2 \,=\, -2\cdot \big(x(t) -y(t) \big)^T \Big(\grad \Phi\big(x(t)\big) -\grad \Phi\big(y(t)\big)  \Big) 
\le\, 0,
\end{equation}
and the system is nonexpansive.
}

The unexpected part of our result is that a nonexpansive polyhedral system is conservative in the first place.
This is quite surprising because a nonexpansive system is not necessarily conservative,\footnote{A simple example is a system in $\R^2$ whose trajectories are circles, traversed with a fixed angular velocity.}  and these two properties appear to be quite unrelated.
In general, a system being conservative is intimately tied with a curl-free property of the underlying field. On the other hand,  nonexpansive systems have no reason to be curl-free. It is thus remarkable that for the systems that we consider the combination of nonexpansiveness and the polyhedral structure  imposes a combinatorial structure around points where regions intersect, which then translates to a curl-free condition.

We note that we  actually establish our main result for a seemingly broader class of systems: well-formed nonexpansive finite-partition systems; cf.~Definitions \ref{def:well formed} and \ref{def:finite-partition},
without assuming, for example, that the different regions are polyhedral. It so happens that the nonexpansiveness assumption together with well-formedness forces the regions to be convex; and a partition of $\R^n$ into convex regions forces the regions to be polyhedral. 
Thus, this broader class of systems in fact reduces to the class of polyhedral hybrid systems introduced in the beginning of this section.


The rest of the paper is organized as follows.  
In Section \ref{sec:pwc}, we lay out the setting of interest and present the main result, 
whose proof  
is then given in Section \ref{sec:proof pwc}. 
In the course of the proof, we use several lemmas, whose proofs are relegated to the appendices for improved readability.
Finally, we discuss the implications of our result, along with some open problems, in Section~\ref{sec:discussion}.

\old{A much studied class of vector fields is the class of  \emph{conservative} vector fields, for which $F(\cdot)$ is the (negative) of the gradient field of some potential function.
If a conservative field is nonexpansive, then the potential function is necessarily convex.

On the other hand, a nonexpansive system is not necessarily conservative. 
For example, the linear dynamical system $\dot{x}=Ax$ with
\begin{equation}\label{eq:matrix A in the example}
A = \left[ \begin{array}{cc} -0.1 &-1 \\1 & -0.1 \end{array}\right],
\end{equation}
is nonexpansive. However, it cannot be the gradient field of any function, because $A$  is not symmetric.
See Fig. \ref{fig:rotary} for   two nonexpansive dynamical systems that are not conservative.

\begin{figure*} 
\begin{center}
\includegraphics[width = .45\textwidth]{fig_rotary.pdf}
\includegraphics[width = .45\textwidth]{fig_lin_spiral.pdf}
\ifIEEE
\vspace{-1cm}
\fi
\end{center}
\caption{Example trajectories of nonexpansive dynamical systems that are not conservative. a) Two-dimensional dynamical system $\left(\dot{x},\dot{y} \right) = (-y,x)$.
b) Two-dimensional linear system 
$\dot{x}= Ax$, with $A$ defined in \eqref{eq:matrix A in the example}.
}
\label{fig:rotary}
\end{figure*}

Nevertheless,
in this paper, we show that every nonexpansive \emph{finite-partition} system is guaranteed to be conservative an, in particular, the (negative) 
the subgradient field of some piecewise linear convex function with finitely many pieces. }

\hide{
\del{It is worth mentioning however that 
there exist counterexamples that show that 
such a strong sensitivity bound 
fails to hold for the larger class of all
nonexpansive conservative systems \cite{AlTG18p3}. In particular, the finiteness assumption on the number of regions is crucial for such bounds to hold.}

\comm{I wonder whether the preceding paragraph would be better placed in the discussion section, after the theorem has been stated. Also, a reader who reads this paper, without knowing about our larger agenda, may not understand why we care.}
\oli{As a motivation for significance of FPCS systems, maybe its good to keep some parts of this paragraph in the introduction. We can move the details (e.g., the last sentence) to the discussion section.}

\comm{Making a note to rethink the spin of this section in a later revision, after we post.}
}

\section{\bf Definitions and Main Result} \label{sec:pwc}
In this section, we define some terminology and three classes of systems. We then present our main result, which states that under a nonexpansiveness assumption, these three classes are equivalent.

\subsection{Definitions} \label{sec:defs}
We start with a definition of general dynamical systems, following 
\cite{Stew11}.

\begin{definition}[Dynamical Systems] \label{def:dynamical}
A \emph{dynamical system} is a set-valued function $F:\R^n\to 2^{\R^n}$.
A \emph{trajectory} of a system $F$ is a  function $x(\cdot)$ whose 
derivative, denoted by $\dot x$,  exists  and satisfies the differential inclusion
$\dot x(t) \in F\big(x(t)\big)$, except possibly for a measure-zero set of times.\footnote{The solution concept used here to described trajectories is consistent with the definition of (unperturbed) trajectories in \cite{AlTG19sensitivity}.
}
\end{definition}

For general systems, some assumptions are needed in order to guarantee existence  of trajectories, which leads us to the next definition. 

\begin{definition}[Well-Formed Dynamical System] \label{def:well formed}
A dynamical system $F$ is well-formed, if:
\begin{itemize}
\item[a)]  $F(\cdot)$ is upper semicontinuous.\footnote{A set-valued function $F$ is called upper semicontinuous if for  every $x\in \R^n$ and any open subset $\mathcal{V}$ of $\R^n$ such that $F(x)\subseteq \mathcal{V}$, there exists an open neighbourhood $\mathcal{U}$ of $x$ such that $F(y)\subseteq \mathcal{V}$, for all $y\in \mathcal{U}$.}


\item[b)] There exists a constant $\alpha$ such that $\Ltwo{F(x)}\le \alpha\big(1+\lVert x \rVert\big)$, for all $x\in \R^n$. 
\item[c)] For every $x\in\R^n$, $F(x)$ is a closed and convex set.
\end{itemize}
\end{definition}
The second condition above prevents solutions from blowing up in finite time.  
The other two conditions are technical, but are
often satisfied.

In this paper, we focus on nonexpansive systems, defined below.
\begin{definition}[Nonexpansive Systems] \label{def:nonexpansive}
A dynamical system $F$ is called \emph{nonexpansive} if for any pair  $x(\cdot)$ and $y(\cdot)$ of  trajectories 
and for any times $t_1$ and $t_2$, with  $t_2\ge t_1$,
\begin{equation}
\Ltwo{x(t_2)-y(t_2)}\le \Ltwo{x(t_1)-y(t_1)},
\end{equation} 
where $\|\cdot\|$ stands for the Euclidean norm.
\end{definition}

Note that the nonexpansiveness property automatically guarantees that if a solution exists, then it is unique. 
We next define the most general class of systems to be considered.

\begin{definition}[Finite-Partition Systems] \label{def:finite-partition}
A dynamical system $F$ is said to be a \emph{finite-partition} system 
if there  exists a finite collection of distinct vectors $\mu_1,\ldots,\mu_m$, such that for any $x\in\R^n$, 
there is some $i$ for which $\mu_i\in F(x)$.
\end{definition}

Loosely speaking, Definition \ref{def:finite-partition} requires that there be a finite set of ``special vectors'' $\mu$ so that at each point in the state space, at least one of these vectors is a possible drift, in the sense of $\dot x=\mu$. As we will see later, a finite-partition system induces a natural ``tiling'' of 
$\R^n$, a concept that we will define shortly. 
We will then proceed with a formal definition of the class of polyhedral hybrid systems, which was already discussed in Section~\ref{sec:introduction}. 
Throughout the paper, we use the term ``polyhedron'' to refer to a closed and convex  set that can be defined in terms of finitely many linear inequalities.


\begin{definition}[Polyhedral tiling]\label{eq:def:tiling}
A collection $\degion_i$, $i=1,\ldots,m$, of nonempty polyhedral subsets of $\R^n$ (``regions'')  is said to be a \emph{polyhedral tiling}
if it satisfies  the following:  
\begin{itemize}
\item [a)]
$\bigcup_i \degion_i=\R^n$; 
\item[b)]
each region $\degion_i$ has a nonempty interior;
\item[c)]
if $i\neq j$, then $\degion_i$ and $\degion_j$ have disjoint interiors.
\end{itemize}

\end{definition}

Given a polyhedral tiling, one can consider hybrid systems that have a constant drift in the interior of each polyhedron. In what follows, we use the notation ${\rm{Conv}(A)}$ to denote the convex hull of a set $A\subseteq\R^n$.

\begin{definition}[Polyhedral Hybrid Systems] \label{def:hybrid}\mbox{\hspace{-4pt}}
Consider a polyhedral tiling $\degion_1,$ $\ldots,\degion_m$, and associated \emph{distinct} vectors  $\mu_1,\ldots,\mu_m\in\R^n$. The corresponding \emph{polyhedral hybrid system} $F$ is defined by letting $F(x)={\rm Conv}\big(\{\mu_i \mid x\in\degion_i\}\big)$, for every $x\in\R^n$. In particular, if $x$ belongs to the interior of $\degion_i$, then $F(x)=\{\mu_i\}$.
\end{definition}

As will be discussed in Section~\ref{sec:discussion}, the requirement that the vectors $\mu_i$ are distinct can be made without loss of generality. 

The third class of systems that we will study is comprised of \emph{conservative} systems, that move along the negative subgradient of a potential field. 
When we further restrict the structure of the potential,  we obtain the class of systems that was studied in 
\cite{AlTG19sensitivity}.

\begin{definition}[FPCS Systems]\label{def:fpcs}
\begin{itemize}
\item[a)]
A function $\Phi:\R^n\to\R$ is called \emph{finitely piecewise linear and convex}  if it is of the form $\Phi(x)=\max_{i}\big(-\mu_i^Tx+b_i\big)$, 
for some finite set  of distinct pairs $(\mu_i,b_i)\in\R^n\times \R$, $i=1,\ldots,m$.
\item[b)]
We say that $F$ is a \emph{Finitely Piecewise Constant  Subgradient (FPCS)} system if there exists a finitely piecewise linear convex function $\Phi:\R^n\to\R$ such that for any $x\in\R^n$, $F(x)=-\partial\Phi(x)$,
where $\partial\Phi(x)$ denotes the subdifferential of $\Phi$ at $x$.
\end{itemize}
\end{definition}

FPCS systems are of particular interest, as they arise in various contexts. On the technical side they have some unusually strong sensitivity properties: as shown in 
\cite{AlTG19sensitivity}, the effect of an additive external perturbation on the state is bounded above by a constant times the magnitude of the \emph{integral} of the perturbation (as opposed to the integral of the magnitude, which is a much weaker upper bound). 

\subsection{Main Result} \label{sec:main}
We are interested in the relation between the following three classes of systems:
(i) well-formed finite-partition systems, (ii) polyhedral hybrid systems, and (iii) FPCS systems.  FPCS systems are automatically polyhedral hybrid systems, but the converse is not always true. Furthermore, finite-partition systems are much more general; in particular, they include systems where the regions of constant drift are not polyhedral. Nevertheless, our main result states that once we restrict our attention to nonexpansive systems, these three classes are essentially the same.

\hide{
\oli{Out of the following existence results, we only use Part (c) in the last paragraph of Subsection \ref{subsec:proof part d}. Can we consider removing the following 4 cases from here. Only briefly mentioning uniqueness (part a) at the end of the previous paragraph. And rendering Part (c) to the proof section (Subsection \ref{subsec:proof part d}) where it is used? }
Before continuing, let us dispense with the  question of existence and uniqueness, starting from any initial state. 
\begin{itemize}
\item[a)]
As long as we are dealing with nonexpansive systems, uniqueness of trajectories is automatic. 
\item[b)]
For the case of well-formed finite-partition systems, we have already noted that existence is guaranteed from Theorem~4.3. of \cite{Stew11}.
\item[c)] For the case of FPCS systems, existence follows from Corollary 4.6 of \cite{Stew11} and the fact (Lemma 2.30  of \cite{Stew11}) that any subgradient dynamical system  is a 
maximal monotone map\footnote{A set-valued function $F:\R^n\to 2^{\R^n}$ is a monotone map if for any $x_1,x_2\in\R^n$ and any $v_1\in F(x_1)$ and $v_2\in F(x_2)$, we have $\big(v_1-v_2\big)^T\big(x_1-x_2\big)\le 0$. It is called a maximal monotone map if it is monotone, and for any monotone map $\tilde{F}$, that satisfies $F(x)\subseteq \tilde{F}(x)$ for all $x$, we have $\tilde{F}=F$.}. \comm{Do we really need this argument? Can't we appeal to the result for finite-partition systems, as in the next item?} \oli{yes, immediate form Par (d) below and Part (c) of  the main theorem.}
\item[d)] It is not hard to check that a polyhedral hybrid systems is well-formed, and therefore  Theorem~4.3. of \cite{Stew11} again implies existence.  \comm{OK?} \oli{Right, but are you using Part (b) of  the main theorem below?}
\end{itemize}
}

\medskip


\begin{theorem}\label{th:pwc} Let $F$ be a nonexpansive dynamical system.
\begin{itemize}
\item[a)] $F$ is an FPCS system if and only if it is a polyhedral hybrid system.
\item[b)] If $F$ is a polyhedral hybrid system, 
or, equivalently, an FPCS system, 
then it is a well-formed finite-partition system.
\item[c)] If $F$ is a well-formed finite partition system, then there exists 
a polyhedral hybrid system (or, equivalently, an FPCS system) $F'$ such that for any initial state $x(0)$, the two systems follow the same trajectory. Furthermore, $F'(x)\subseteq F(x)$, for all $x\in\R^n$.
\end{itemize}
\end{theorem}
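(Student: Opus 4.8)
The plan is to prove the three parts in the order (a), (b), (c), since the later parts build on the earlier ones, and to note that the substantive work is concentrated in part (c); parts (a) and (b) are essentially structural bookkeeping.

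For part (a), the ``only if'' direction is immediate: given an FPCS system with potential $\Phi(x)=\max_i(-\mu_i^Tx+b_i)$, the sets $\degion_i = \{x : -\mu_i^Tx+b_i \geq -\mu_j^Tx+b_j \text{ for all } j\}$ on which the $i$th affine piece is active form a polyhedral tiling (after discarding any $\degion_i$ with empty interior, which forces us to relabel and possibly merge indices with equal $\mu_i$ — here is where the ``distinct $\mu_i$'' convention, said to be WLOG in Section~\ref{sec:discussion}, is used), and on each such region $\partial\Phi(x) = \conv(\{\mu_j : x \in \degion_j\})$ by the standard formula for the subdifferential of a max of affine functions. Hence $F = -\partial\Phi$ is exactly the polyhedral hybrid system associated with this tiling. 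For the ``if'' direction, suppose $F$ is a \emph{nonexpansive} polyhedral hybrid system with tiling $\degion_i$ and drifts $\mu_i$. This is the direction that genuinely uses nonexpansiveness and is the heart of the paper: one must produce offsets $b_i$ so that $\max_i(-\mu_i^Tx+b_i)$ has the prescribed subgradient structure. I expect this to be carried out in Section~\ref{sec:proof pwc} via the combinatorial ``curl-free around intersections'' phenomenon advertised in the introduction — roughly, nonexpansiveness forces, for each pair of adjacent regions $\degion_i,\degion_j$, that $\mu_i-\mu_j$ be normal to their common facet and point the right way, and then a consistency (cocycle) argument around lower-dimensional faces lets one integrate the ``jumps'' $\mu_i^Tx$ into a single well-defined convex piecewise-linear potential. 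This is the step I expect to be the main obstacle, and I would decompose it as: (i) show each $\degion_i$ is where $-\mu_i^Tx$ dominates the competitors, i.e. the facet between $\degion_i,\degion_j$ lies in a hyperplane $\{-\mu_i^Tx+b_i = -\mu_j^Tx+b_j\}$; (ii) show the implied linear system for the $b_i$ is consistent, using that the adjacency graph of the tiling, restricted around any codimension-2 face, gives cycles whose net offset must vanish — and this vanishing is precisely what nonexpansiveness buys; (iii) verify the resulting $\Phi$ is convex (automatic once it is a max of affine functions that is ``tight'' on each region) and that $-\partial\Phi = F$ everywhere, including on the boundaries, using Definition~\ref{def:hybrid}'s convex-hull prescription.

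For part (b), assume $F$ is a polyhedral hybrid system (equivalently FPCS, by (a)). Well-formedness is routine: $F(x) = \conv(\{\mu_i : x\in\degion_i\})$ is a convex hull of finitely many vectors, hence closed and convex (condition (c) of Definition~\ref{def:well formed}); it is bounded by $\max_i\|\mu_i\|$ uniformly in $x$, giving condition (b) with $\alpha = \max_i\|\mu_i\|$; and upper semicontinuity (condition (a)) follows because for $y$ near $x$ the active index set $\{i : y\in\degion_i\}$ is contained in $\{i : x\in\degion_i\}$ (a point in the interior of $\degion_i$ has a neighborhood inside $\degion_i$, and a point on a boundary has a neighborhood meeting only the regions already containing it), so $F(y)\subseteq F(x)\subseteq\mathcal{V}$. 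That $F$ is a finite-partition system in the sense of Definition~\ref{def:finite-partition} is immediate: the $\mu_i$ are the required special vectors, and every $x$ lies in some $\degion_i$ (by part (a) of Definition~\ref{eq:def:tiling}), whence $\mu_i\in F(x)$.

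For part (c), let $F$ be a well-formed nonexpansive finite-partition system with special vectors $\mu_1,\dots,\mu_m$. The strategy, following the discussion after Definition~\ref{def:finite-partition} and the introduction, is: (1) use the $\mu_i$ to define a ``tiling'' $\degion_i = \{x : \mu_i \in F(x)\}$ (these cover $\R^n$ by the finite-partition property, but may overlap and need not a priori be polyhedral); (2) show that nonexpansiveness plus well-formedness forces each $\degion_i$ to be convex — this is the key claim flagged in the introduction (``the nonexpansiveness assumption together with well-formedness forces the regions to be convex'') — presumably by a trajectory argument: if $x,y\in\degion_i$, a trajectory from $x$ and one from $y$ both move with drift $\mu_i$ initially, and comparing with trajectories from points on the segment $[x,y]$ together with the distance-nonincreasing property pins the segment into $\degion_i$; (3) invoke the elementary fact that a finite cover of $\R^n$ by closed convex sets with pairwise-disjoint interiors (after trimming empty-interior sets) is a polyhedral tiling, so the $\degion_i$ are polyhedra; (4) define $F'$ to be the polyhedral hybrid system for this tiling and these drifts, so that $F'(x) = \conv(\{\mu_i : x\in\degion_i\}) \subseteq F(x)$ — the inclusion because $F(x)$ is convex (well-formedness (c)) and contains each such $\mu_i$ by definition of $\degion_i$; and (5) show trajectory equivalence: since $F'\subseteq F$, every trajectory of $F'$ is a trajectory of $F$, and by nonexpansiveness $F$ has unique trajectories, so the $F'$-trajectory from any $x(0)$ is \emph{the} $F$-trajectory — one only needs existence for $F'$, which holds because $F'$ is a polyhedral hybrid system, hence well-formed by part (b), hence has trajectories by Theorem~4.3 of \cite{Stew11}. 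The main obstacle in part (c) is step (2), the convexity of the regions; I would expect it to require the full strength of nonexpansiveness applied to carefully chosen families of trajectories near the boundary of $\degion_i$, and it is plausible that the paper proves it only after the facet-normal structure from part (a)'s proof is in hand, or as a standalone lemma relegated to an appendix.
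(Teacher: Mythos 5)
Your overall architecture matches the paper's: your treatment of the easy direction of (a) and of part (b) is essentially identical to the paper's, and your endgame for (c) (define $F'$ from the regions and the drifts, get $F'(x)\subseteq F(x)$ from convexity of $F(x)$, then conclude trajectory equivalence from existence of $F'$-trajectories plus uniqueness of $F$-trajectories under nonexpansiveness) is exactly what the paper does. However, the two load-bearing steps are only asserted, and one of them is stated incorrectly. In part (a), your step (ii) --- that local consistency of the offsets around lower-dimensional faces ``lets one integrate'' --- is the crux, and a local cocycle condition does not by itself yield globally well-defined $b_i$. The paper first proves that $b_{ij}=\sup_{x\in\degion_i}(\mu_i-\mu_j)^Tx$ is finite and attained and satisfies $b_{ij}=-b_{ji}$ and $b_{ij}+b_{jk}+b_{ki}=0$ at triple intersections (Lemma~\ref{lem:D is curl free}, from nonexpansiveness), but then needs a genuinely global argument: a geometric property of polyhedral tilings (Lemma~\ref{lem:prop of tiling of Rn}) to make ``fine'' paths jump-free, and an induction on the radius of a containing ball to show every fine cycle has zero weight (Lemma~\ref{lem:map graph}), before the region offsets $b_i$ can be defined path-independently. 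Even then one still needs Lemma~\ref{lem:Phi is well defined and convex}, i.e., that the maximum in $\max_j(-\mu_j^Tx+b_j)$ is attained by index $i$ exactly on $\degion_i$; this requires a segment-tracing argument and uses the distinctness of the $\mu_i$, and it is where $-\partial\Phi(x)=F(x)$ is actually verified --- your step (iii) treats it as automatic.

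In part (c), your step (2) is false as stated: the raw sets $\{x\mid \mu_i\in F(x)\}$ (the paper's $\region_i$) need not be convex, because nonexpansiveness constrains only trajectories, so $F$ may contain spurious drifts at isolated points; the paper explicitly notes that the $\region_i$ can have isolated points, and for this reason works with $\degion_i=\textrm{closure}(\region_i^\circ)$. Moreover, your step (3) invokes as ``elementary'' the fact that finitely many closed convex sets with disjoint interiors covering $\R^n$ must be polyhedra; this is true but requires a separating-hyperplane argument that is essentially the same work you are trying to avoid, and the paper sidesteps the convexity-then-polyhedrality route entirely: it derives $(\mu_i-\mu_j)^T(x_i-x_j)\le 0$ for $x_i\in\degion_i$, $x_j\in\degion_j$ from nonexpansiveness, sets $d_{ij}=\inf_{x_j\in\degion_j}(\mu_i-\mu_j)^Tx_j$, and shows directly that $\degion_i$ coincides with the polyhedron $\pp_i=\bigcap_{j\ne i}\{x\mid (\mu_i-\mu_j)^Tx\le d_{ij}\}$, which gives convexity, polyhedrality, and the tiling property in one stroke. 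So your plan for (c) is repairable, but as written the convexity claim is applied to the wrong sets and the polyhedrality step is missing its proof.
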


While some parts of the result are straightforward, Part (c) is
rather surprising. Nonexpansive systems are in general not conservative. Nevertheless,
nonexpansive polyhedral hybrid systems turn out to be be conservative 
because they are implicitly forced to have a special and 
subtle structure at the intersection of different regions. In the same spirit, 
nonexpansiveness, the finite-partition property, and well-formedness, taken together, force a polyhedral structure on the regions involved.

On the technical side, Theorem \ref{th:pwc} establishes equivalence of finite-partition and FPCS systems, in terms of the trajectories that they generate, even though the mappings $F$ may be a bit different. In order to appreciate what is involved here, consider a one-dimensional system of the following form: 
$$F(x)=\begin{cases}
1,& \mbox{if } x<0,\\
[-2,2],&  \mbox{if } x=0,\\
-1,&\mbox{if } x>0.
\end{cases}
$$
Trajectories of this system move at unit speed towards the origin; once at the origin, trajectories remain there. 
We observe that $F$ is upper semicontinuous and that
this is a well-formed nonexpansive finite-partition system, with two drift vectors, $\mu_1=-1$ and $\mu_2=1$. Yet, 
it is not a polyhedral hybrid system because $F(0)$ is strictly larger than the convex hull of the set $\{-1,1\}$ of drifts  in the  regions adjacent to 0. In the same spirit, the trajectories of $F$ are the negative subgradient flow for the convex potential function $\Phi(x)=|x|$, but, strictly speaking, $F$ is not a FPCS system, because
$F(0)=[-2,2] \neq [-1,1]=-\partial \Phi(0)$. On the other hand, the system $F$ is ``equivalent'' to the FPCS system associated with  $\Phi(x)=|x|$  (and also a polyhedral hybrid system), in the sense that they generate the same trajectories.

\hide{
It is easy to see that every subgradient system is well-formed and nonexpansive. 
The reverse however is not true. Fig. \ref{fig:rotary} illustrates a well-formed and nonexpansive system which is not a subgradient system. 

In this paper, we  show that if a well-formed and  nonexpansive dynamical system has essentially a finite set of drifts, then it will be a subgradient system.
More specifically:
}

\hide{
\section{\bf Approximating  Dynamical  Systems with  finite-partiotion Systems}
Here we show that every nonexpansive dynamical system over a compact domain can be approximated by a finite-partiotion system.
By an approximation, we mean the trajectories of the two systems stay arbitrarily close over a fixed time window, and while in the compact domain.
We will then show that, even when the initial system is nonexpansive, this approximating finite-partition system cannot be nonexpansive. 
In particular, for the dynamical system in Fig.~\ref{fig:rotary} (b), the exists no approximating nonexpansive finite-partition system in the above sense.
This is because the plane orthogonal to the gradient at some point will cut trough the trajectory emanating from that point. This cannot be the case for gradient fields where trajectories intersect with each level set only once.

\oli{Maybe it is better to remove this section, because it lessens the significance of nonexpansive finite-partition systems.}

\begin{corollary} [Approximation of Nonexpansive Systems by Subdifferential Systems]
Consider a well-formed and nonexpansive dynamical system $F$ on the unit ball of $\R^n$.
Then, for any $\epsilon, T>0$, there exists a subgradient system $-\partial\Phi$ such that 
for any pair of trajectories $x(\cdot)$ and $y(\cdot)$ of $F$ and $-\partial\Phi$, initiated at $x(0)=y(0)$, 
\begin{equation}
\Ltwo{x(t)-y(t)}   \le  \epsilon,
\end{equation}
for all $t\le T$.
\end{corollary}

- where else do the finite-partiotion systems arise? add references and applications
}



\section{\bf Proof of Theorem \ref{th:pwc}} \label{sec:proof pwc}

\hide{, organized in a sequence of three subsections. We first review the outline of the proof in Subsection~\ref{subsec:proof outline}  and 
 state a set of combinatorial and geometric lemmas in Subsection~\ref{subsec:proof pre}. 
We then present the proof of the theorem in Subsection \ref{subsec:proof th pwc}.}

In this section, we give the proof of Theorem~\ref{th:pwc}. 
Throughout, and for any statement that we make about various systems, we always assume that the systems are nonexpansive, even if this assumption is not explicitly stated.
\subsection{\bf Part (a): Polyhedral hybrid systems and FPCS systems are equivalent} 
We begin with the difficult direction of Part (a): we show that nonexpansive polyhedral hybrid dynamical systems are conservative.
The proof is given in a sequence of several lemmas, whose proofs are relegated to appendices. The general line of argument involves an explicit construction of a
convex potential function associated with a given polyhedral hybrid system. To develop our construction, we  first consider geometric polygonal paths that travel from one region to another. We endow these paths with certain weights and
show that cycles have zero weight. We then  leverage this conservation property to associate a weight with each region. We finally use these region weights to define
a convex potential function over $\R^n$, and
show that 
its subdifferential flow yields the same trajectories as the given polyhedral hybrid system.

\vspace{10pt}

We now start with the formal proof.
Consider a polyhedral tiling $\degion_1, \ldots, \degion_m$, 
distinct vectors $\mu_1,\ldots,\mu_m$, and the
 associated polyhedral hybrid system 
\begin{equation}\label{eq:Fx}
F(x)={\rm Conv}\big(\{\mu_i \mid x\in\degion_i\}\big).
\end{equation}
We say that two regions  $\degion_i$ and $\degion_j$,  are \emph{adjacent} if 
$i\neq j$ and
their intersection is nonempty. 
To any pair of adjacent regions $\degion_i$ and $\degion_j$ we associate a weight $b_{ij}$, by defining
\begin{equation}\label{eq:def of bij in proof of th pwc}
b_{ij}\triangleq \sup_{x\in \degion_i} \big(\mu_i-\mu_j\big)^T x.
\end{equation}
We also let $b_{ii} = 0$, for all $i$.

\hide{
\begin{definition}[Curl-Free Map-Graph] \label{def:curl free map g}
Consider a finite-segmentation $\degion=\left\{ \degion_i, \, i=1,\ldots,m  \right\}$ and its corresponding weighted map-graph $G$, endued with weights $b_{ij}$ for all links $ij$. Then, $\big(\degion, G \big)$ is said to be curl-free if  the following conditions hold:
\begin{itemize}
\item[1) ]
 $b_{ij}=-b_{ji}$, for all adjacent pair of regions $\degion_i$ and $\degion_j$,
\item[2) ]
$ b_{ij}+b_{jk}+b_{ki}=0$,  for all triples $i,j,k$ for which $\degion_i\bigcap\degion_j\bigcap\degion_k\ne\emptyset$.
\end{itemize}
\end{definition}
}


\begin{lemma}[Local Conservation of Weights] \label{lem:D is curl free}
The supremum in \eqref{eq:def of bij in proof of th pwc} is always attained, and every $b_{ij}$ is finite. 
Furthermore,
the weights $b_{ij}$ have the following properties:
\begin{itemize}
\item[a) ]
 $b_{ij}=-b_{ji}$, whenever $\degion_i$, $\degion_j$ are adjacent;
\item[b) ]
$ b_{ij}+b_{jk}+b_{ki}=0$,  for all triples $i,j,k$ for which $\degion_i\bigcap\degion_j\bigcap\degion_k\ne\emptyset$.
\end{itemize}
\end{lemma}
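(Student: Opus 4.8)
The plan is to extract information from the nonexpansiveness hypothesis by comparing carefully chosen pairs of trajectories, and then to translate the resulting inequalities into the stated identities on the weights $b_{ij}$.

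First, I would establish that the supremum in~\eqref{eq:def of bij in proof of th pwc} is attained and finite. Fix adjacent regions $\degion_i,\degion_j$ and pick a point $z\in\degion_i\cap\degion_j$. For $x\in\degion_i$, consider the trajectory starting at $x$ with initial drift $\mu_i$ (which stays in $\degion_i$ at least for a short time, since $x\in\degion_i$ and $\mu_i\in F(x)$), and a trajectory starting at a point of $\degion_j$ moving with drift $\mu_j$. A short-time expansion of $\Ltwo{x(t)-y(t)}^2$ together with nonexpansiveness gives a linear inequality of the form $(\mu_i-\mu_j)^T(x - z)\le 0$ for all $x\in\degion_i$ and all $z\in\degion_i\cap\degion_j$; this both bounds $b_{ij}$ above (by $\sup_{z}(\mu_i-\mu_j)^Tz$ over $z$ in the intersection, a fixed polyhedron on which the linear functional need not be bounded, so one has to be a little careful, but the inequality $b_{ij}\le (\mu_i-\mu_j)^Tz$ for any $z$ in the intersection already gives finiteness) and shows that the sup is actually achieved on the face of $\degion_i$ where $(\mu_i-\mu_j)^T x$ is maximized; since $\degion_i$ is polyhedral and the functional is bounded on it, the max is attained. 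I expect this to require a short lemma (probably proved in an appendix) saying that if a drift $\mu\in F(x)$ points from $x$, the trajectory locally stays on the side where it should, so that the short-time comparison is legitimate.

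Next, for part~(a), symmetry $b_{ij}=-b_{ji}$: run the same comparison the other way, with $x\in\degion_j$ moving along $\mu_j$ and the reference point in $\degion_i\cap\degion_j$ moving along $\mu_i$, to get $(\mu_j-\mu_i)^T(x-z)\le 0$ for $x\in\degion_j$, i.e. $b_{ji}\le (\mu_j-\mu_i)^Tz$ for $z$ in the intersection. Combined with $b_{ij}\le(\mu_i-\mu_j)^Tz$ from the previous step, adding gives $b_{ij}+b_{ji}\le 0$. For the reverse inequality, take a point $w\in\degion_i\cap\degion_j$ attaining (or nearly attaining) the sup defining $b_{ij}$ — or argue by continuity along the shared boundary — to conclude $b_{ij}+b_{ji}\ge 0$; the key point is that the common face $\degion_i\cap\degion_j$ sees the linear functional $(\mu_i-\mu_j)^Tx$ attain its $\degion_i$-supremum, so $b_{ij}=(\mu_i-\mu_j)^T w=-b_{ji}$ for a suitable boundary point $w$.

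For part~(b), the cocycle identity, I would similarly localize at a point $z\in\degion_i\cap\degion_j\cap\degion_k$. By the argument above, for a boundary point on the relevant faces, $b_{ij}=(\mu_i-\mu_j)^T z_{ij}$, $b_{jk}=(\mu_j-\mu_k)^Tz_{jk}$, $b_{ki}=(\mu_k-\mu_i)^Tz_{ki}$, where each $z_{\bullet}$ can be taken to approach the common point $z$ (using that the weights are determined by the affine behavior near the triple intersection). Summing the three telescopes to $0$ once all three points are taken to be the same point $z$; the content is showing that each $b_{ij}$ equals $(\mu_i-\mu_j)^Tz$ evaluated at \emph{any} point $z$ of the triple intersection, which follows because $z\in\degion_i$ gives $b_{ij}\ge(\mu_i-\mu_j)^Tz$ is false in general — rather, one uses $b_{ij}\le(\mu_i-\mu_j)^T z$ (from nonexpansiveness at $z$) together with $b_{ji}\le(\mu_j-\mu_i)^Tz$ and $b_{ij}=-b_{ji}$ from part~(a) to force $b_{ij}=(\mu_i-\mu_j)^Tz$ for every $z$ in the pairwise intersection, hence in particular at the triple point. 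Then $b_{ij}+b_{jk}+b_{ki}=\big((\mu_i-\mu_j)+(\mu_j-\mu_k)+(\mu_k-\mu_i)\big)^Tz=0$.

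The main obstacle I anticipate is the rigorous justification of the short-time trajectory comparison: one must show that from a point $x\in\degion_i$ there genuinely exists a trajectory whose drift is $\mu_i$ on a nonempty time interval (so that $x(t)=x+t\mu_i+o(t)$), and symmetrically for the comparison trajectory, so that differentiating $\Ltwo{x(t)-y(t)}^2$ at $t=0^+$ and invoking nonexpansiveness yields $(\mu_i-\mu_j)^T(x-y)\le 0$. Handling the boundary/tie regions and confirming that the supremum-attaining point can be chosen in the shared face (rather than only in the closure in some weaker sense) is where the polyhedral structure of the tiling has to be used carefully. Everything else is a short computation once these geometric facts are in place.
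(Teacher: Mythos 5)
Your overall strategy is the same as the paper's (use nonexpansiveness to get the monotonicity inequality $(\mu_i-\mu_j)^T(x_i-x_j)\le 0$ between the two regions, evaluate at a common point $z\in\degion_i\cap\degion_j$ to get $b_{ij}=(\mu_i-\mu_j)^Tz$, then deduce antisymmetry and telescope at a triple point), but there is a genuine gap exactly at the step you flag as the "main obstacle," and the fix you propose for it would fail. You want to run the trajectory comparison directly at boundary points: "the trajectory starting at $x$ with initial drift $\mu_i$" for $x\in\degion_i$ possibly on the boundary, and a trajectory starting at $z\in\degion_i\cap\degion_j$ "moving with drift $\mu_j$." For a nonexpansive system the trajectory from a given point is unique, and at a boundary point its derivative is generally \emph{not} the drift you want: in the paper's one-dimensional example ($\mu_1=-1$, $\mu_2=1$), the unique trajectory from the interface point $x=0$ is constant, so no trajectory from that point has drift $\mu_1$ or $\mu_2$ on any interval. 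Hence the auxiliary lemma you anticipate ("the trajectory locally stays on the side where it should") is false as stated, and your derivation of $(\mu_i-\mu_j)^T(x-z)\le 0$ for $z$ in the intersection is unsupported. The paper's resolution is much simpler and you should adopt it: prove the inequality only for $x_i\in\degion_i^\circ$, $x_j\in\degion_j^\circ$, where $F$ is the singleton $\{\mu_i\}$ (resp.\ $\{\mu_j\}$) so the derivative at $t=0$ is forced, and then extend to all of $\degion_i\times\degion_j$ by taking limits along interior sequences, since the inequality is a closed condition. No statement about trajectories emanating from boundary points is ever needed.

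A second, smaller confusion: you dismiss as "false in general" the claim that $z\in\degion_i$ gives $b_{ij}\ge(\mu_i-\mu_j)^Tz$. That claim is trivially true, since $b_{ij}$ is by definition the supremum of $(\mu_i-\mu_j)^Tx$ over $x\in\degion_i$ and $z$ is one of the competitors. Combined with the nonexpansiveness bound $b_{ij}\le(\mu_i-\mu_j)^Tz$ (valid for every $z\in\degion_i\cap\degion_j$ once the inequality above is extended to closures), it immediately gives $b_{ij}=(\mu_i-\mu_j)^Tz$ at \emph{every} point of the intersection; this is what makes attainment, finiteness, part (a) (switch $i$ and $j$ at the same $z$), and part (b) (evaluate all three weights at one triple point) one-line consequences. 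Your workaround for part (a) ("the common face sees the functional attain its $\degion_i$-supremum") is asserted rather than proved, and your part (b) route relies on part (a); using the trivial direction removes this circular-looking dependence. Also note part (b) must cover the case of repeated indices (e.g.\ $i=j\neq k$), which the paper handles via $b_{ii}=0$ and part (a); your sketch treats only distinct $i,j,k$.
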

The proof is given in Appendix~\ref{subsec:G is curl free} and relies on the  nonexpansive property of the dynamics. 

In our next step, we associate to each region $\degion_i$ a weight $b_i$, such that $ b_i-b_j = b_{ij}$, for all pairs of adjacent regions $\degion_i$ and $\degion_j$.
This is made possible because of a global conservation property of weights over geometric paths on the tiling.
Before going through this global conservation property, we need an auxiliary lemma and some definitions. Lemma \ref{lem:prop of tiling of Rn} below essentially asserts that if we can find points that come sufficiently close, simultaneously, to each one of three polyhedra, then these polyhedra must have a common element.
Note that such a property is special to polyhedra, and does not hold more generally: it is not hard to find examples of disjoint closed and convex sets whose distance is zero.

\begin{lemma}[A Geometric Property of Polyhedral Tilings]\label{lem:prop of tiling of Rn}
Consider the regions $\degion_i$ in a polyhedral tiling.
There exists a constant $\gamma>0$ such that if a closed Euclidean ball of radius $\gamma$ intersects with any (not necessarily distinct) three regions $\degion_i$, $\degion_j$, and $\degion_k$, then $\degion_i \bigcap \degion_j \bigcap \degion_k$ is nonempty.
\end{lemma}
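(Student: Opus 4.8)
The plan is to reduce the statement to a finite family of "non-intersecting polyhedra" and then use the fact that two disjoint polyhedra in $\R^n$ are separated by a positive distance, together with a uniform version of this fact for finitely many polyhedra. First I would observe that there are only finitely many unordered triples $(i,j,k)$ (with repetition allowed) drawn from $\{1,\dots,m\}$, so it suffices to produce, for each \emph{fixed} triple, a radius $\gamma_{ijk}>0$ with the stated property, and then take $\gamma=\min_{ijk}\gamma_{ijk}$. So fix a triple $\degion_i,\degion_j,\degion_k$ and suppose, for contradiction, that $\degion_i\cap\degion_j\cap\degion_k=\emptyset$. I want to show that balls of small enough radius cannot meet all three.

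The key tool is the following quantitative separation fact for polyhedra: if $P_1,\dots,P_r$ are polyhedra in $\R^n$ with $\bigcap_\ell P_\ell=\emptyset$, then there is a constant $c>0$ such that for every point $x\in\R^n$, $\max_\ell \mathrm{dist}(x,P_\ell)\ge c$. This is where the polyhedral hypothesis is essential — the remark in the excerpt already flags that it fails for general convex sets. I would prove this auxiliary fact by writing each $P_\ell=\{x: A_\ell x\le d_\ell\}$; emptiness of the intersection is emptiness of the single polyhedron $\{x: A_\ell x\le d_\ell \ \forall \ell\}$, and by Farkas's lemma / LP duality there is a fixed nonnegative combination of the inequalities that produces the contradiction $0\le -\varepsilon$ for some $\varepsilon>0$. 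Then for an arbitrary $x$, the residuals $(A_\ell x-d_\ell)_+$ cannot all be too small, or else the same nonnegative combination would be close to that contradiction; quantifying this gives a lower bound on $\max_\ell\|(A_\ell x - d_\ell)_+\|$, and finally one converts a bound on constraint residuals of a polyhedron to a bound on Euclidean distance to the polyhedron via Hoffman's lemma (Hoffman bound / Lipschitz error bound for polyhedra). Applying this with $r=3$ and $P_1=\degion_i$, $P_2=\degion_j$, $P_3=\degion_k$ yields a constant $c_{ijk}>0$ with $\max\{\mathrm{dist}(x,\degion_i),\mathrm{dist}(x,\degion_j),\mathrm{dist}(x,\degion_k)\}\ge c_{ijk}$ for all $x$.

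To finish: if a closed ball $\overline{B}(x_0,\gamma)$ meets all three regions, then each of $\mathrm{dist}(x_0,\degion_i)$, $\mathrm{dist}(x_0,\degion_j)$, $\mathrm{dist}(x_0,\degion_k)$ is at most $\gamma$, so $\max$ of the three is at most $\gamma$. Taking $\gamma_{ijk} = c_{ijk}/2$ (or any value strictly below $c_{ijk}$) contradicts the lower bound, so no such ball exists unless $\degion_i\cap\degion_j\cap\degion_k\neq\emptyset$. Setting $\gamma=\min_{i,j,k}\gamma_{ijk}>0$, where the minimum ranges over the finitely many triples with empty intersection (triples with nonempty intersection impose no constraint), completes the proof.

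The main obstacle is the quantitative separation step, i.e., getting a \emph{uniform} positive lower bound on $\max_\ell \mathrm{dist}(x,P_\ell)$ over all $x\in\R^n$ rather than merely positivity of $\mathrm{dist}(P_1,\dots,P_r)$ pairwise; this genuinely uses polyhedrality and is naturally handled by combining Farkas's lemma with Hoffman's error bound. Everything else — the reduction to finitely many triples and the final contradiction — is routine. One caveat to handle carefully: when the triple has repeated indices (e.g. $i=j\neq k$) the argument is the same with $P_1=\degion_i$, $P_2=\degion_k$, and the conclusion $\degion_i\cap\degion_k\neq\emptyset$ is exactly what "nonempty $\degion_i\cap\degion_j\cap\degion_k$" means in that case; and when all three indices coincide the statement is trivial since $\degion_i\neq\emptyset$.
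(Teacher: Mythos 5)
Your proposal is correct and is essentially the paper's own argument: the paper likewise reduces to the finitely many triples with empty intersection and, for each such triple, uses a linear program (minimize the common constraint violation $\epsilon$ over the stacked, unit-norm-row systems), whose attained optimum $\epsilon^*>0$ plays exactly the role of your Farkas/duality constant, after which one takes the minimum over triples. One small correction: the conversion you need at the end is that a large residual in some violated constraint forces a large Euclidean distance to that polyhedron, which is the elementary Cauchy--Schwarz direction (handled in the paper by normalizing rows to unit norm); Hoffman's error bound is the reverse inequality (distance bounded above by a constant times the residual) and is not the tool needed for that step.
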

The proof is given in Appendix~\ref{app:proof 3ball}, and relies on a bound derived through an auxiliary linear program.  
We now fix the constant $\gamma$ of Lemma~\ref{lem:prop of tiling of Rn}, and let
\begin{equation} \label{def:delta}
\delta = \gamma/{3}.
\end{equation}

\begin{definition}[Paths and Cycles] \label{def:path}
A sequence $x_1,\ldots,x_t$ of points of $\R^n$ is a {\bf path} if each $x_i$ is in the interior of some region $\degion_{k_i}$. 
A path is {\bf jump-free} if for $i=1,\ldots,t-1$, the regions $\degion_{k_i}$ and $\degion_{k_{i+1}}$ are either the same or adjacent.
A path is {\bf fine} if for $i=1,\ldots,t-1$, and with $\delta$ as defined in \eqref{def:delta}, we have $\Ltwo{x_i-x_{i+1}}\le \delta$.
To every jump-free path, we associate a weight: 
\begin{equation}
W\big(x_1,\ldots,x_t\big) \,=\, \sum_{i=1}^{t-1}  b_{k_i k_{i+1}} .
\end{equation}
A {\bf cycle} is a path that ``ends where it started'', i.e.,
 $x_1=x_t$.
\end{definition}

Note that, as a consequence of Lemma~\ref{lem:prop of tiling of Rn}  and our choice of $\delta$, every fine path is jump-free.\footnote{To see this, consider $x_i$ and $x_{i+1}$ on a fine path. The segment that joins $x_i$ to $x_{i+1}$ is contained in a ball of radius at most $\delta$, and which intersects both  $\degion_{k_i}$ and $\degion_{k_{i+1}}$. Then, 
Lemma~\ref{lem:prop of tiling of Rn} applies and shows that  $\degion_{k_i}$ and $\degion_{k_{i+1}}$ must intersect, and are therefore the same or adjacent.}
The next lemma establishes a global conservation property for path weights.
\begin{lemma}[Global Conservation of Weights] \label{lem:map graph}
Every fine cycle has zero weight.
\end{lemma}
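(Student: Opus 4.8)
\textbf{Proof plan for Lemma~\ref{lem:map graph}.}

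The plan is to reduce an arbitrary fine cycle to a composition of elementary ``triangle'' cycles, each of which has zero weight by Lemma~\ref{lem:D is curl free}(b), together with ``backtrack'' cancellations coming from Lemma~\ref{lem:D is curl free}(a). Concretely, I would fix a fine cycle $x_1,\ldots,x_t=x_1$ with associated regions $\degion_{k_1},\ldots,\degion_{k_t}$, and argue by induction on the number of points $t$. The base cases ($t\le 3$) are immediate: a cycle with two points has weight $b_{k_1k_2}+b_{k_2k_1}=0$ by part (a), and a three-point cycle $x_1,x_2,x_3=x_1$ has weight $b_{k_1k_2}+b_{k_2k_3}+b_{k_3k_1}$, which is zero by part (b) once we check that $\degion_{k_1}\cap\degion_{k_2}\cap\degion_{k_3}\neq\emptyset$. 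That intersection claim is exactly where the choice $\delta=\gamma/3$ is used: the three points $x_1,x_2,x_3$ lie within pairwise distance $2\delta$, hence inside a ball of radius $\gamma$ meeting all three regions, so Lemma~\ref{lem:prop of tiling of Rn} applies.

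For the inductive step with $t\ge 4$, the key move is a ``chord-shortcutting'' argument. Pick an interior point $x_2$ of the cycle and compare the two-step subpath $x_1,x_2,x_3$ with the direct step $x_1,x_3$. Since $\Ltwo{x_1-x_2}\le\delta$ and $\Ltwo{x_2-x_3}\le\delta$, we get $\Ltwo{x_1-x_3}\le 2\delta\le\gamma$, so the ball of radius $\gamma$ around (say) $x_2$ meets $\degion_{k_1},\degion_{k_2},\degion_{k_3}$, and Lemma~\ref{lem:prop of tiling of Rn} gives $\degion_{k_1}\cap\degion_{k_2}\cap\degion_{k_3}\neq\emptyset$; hence by Lemma~\ref{lem:D is curl free}(b), $b_{k_1k_2}+b_{k_2k_3}=b_{k_1k_3}$ (using also $b_{k_3k_1}=-b_{k_1k_3}$ from part (a)). Therefore deleting $x_2$ and inserting the direct edge $x_1\to x_3$ produces a new cycle $x_1,x_3,x_4,\ldots,x_t=x_1$ with $t-1$ points, the \emph{same} total weight, and which is still jump-free (the new edge $x_1x_3$ joins regions that we just showed are the same or adjacent). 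One subtlety: after shortcutting, the new cycle need not be \emph{fine}, since $\Ltwo{x_1-x_3}$ could exceed $\delta$. To keep the induction clean I would therefore state and prove the stronger statement that \emph{every jump-free cycle all of whose consecutive points lie within distance $\gamma$ of each other has zero weight} (fine cycles being a special case with the bound $\delta$), and carry the induction on $t$ for that class; shortcutting preserves the ``within $\gamma$'' property since $2\delta=2\gamma/3<\gamma$, and in fact each successive shortcut only needs one new triangle-intersection check of the same type. Alternatively, one can avoid enlarging the class by contracting in smaller increments, but tracking the $\gamma$-bounded class is the least painful bookkeeping.

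I expect the main obstacle to be precisely this bookkeeping around the fineness/jump-freeness of the intermediate cycles: one must ensure that the triangle inequality estimates never push consecutive distances past the threshold $\gamma$ where Lemma~\ref{lem:prop of tiling of Rn} stops applying, and that each intermediate object is still a legitimate path (all points in region interiors, consecutive regions equal or adjacent). A second, minor point to handle carefully is degeneracy: consecutive points may lie in the \emph{same} region ($k_i=k_{i+1}$), in which case $b_{k_ik_{i+1}}=b_{ii}=0$ contributes nothing and the corresponding ``triangle'' may be degenerate — these cases should simply be absorbed into the induction (delete the redundant point). Once the cycle has been contracted down to two or three points, the base cases finish the argument, and since weight was preserved at every step, the original fine cycle has zero weight.
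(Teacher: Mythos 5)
There is a genuine gap in the inductive step. Your chord-shortcutting induction on the number of points requires, at each contraction of $x_1,x_j,x_{j+1}$ to $x_1,x_{j+1}$, that some ball of radius $\gamma$ meet $\degion_{k_1}$, $\degion_{k_j}$, $\degion_{k_{j+1}}$, so that Lemma~\ref{lem:prop of tiling of Rn} and Lemma~\ref{lem:D is curl free}(b) give $b_{k_1k_j}+b_{k_jk_{j+1}}=b_{k_1k_{j+1}}$. But the chord you create grows by up to $\delta$ with every contraction: after $j$ shortcuts its length can be $(j+1)\delta$, so for a fine cycle of spatial diameter much larger than $\gamma$ the working cycle soon has consecutive points farther apart than $2\gamma$. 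At that stage no $\gamma$-ball can contain both endpoints of the chord, Lemma~\ref{lem:prop of tiling of Rn} no longer applies, and worse, the two regions joined by the chord need not be adjacent at all, so the weight $b_{k_1k_{j+1}}$ of the new edge is not even defined and jump-freeness is lost. Your proposed repair---enlarging the class from fine ($\le\delta$) to ``consecutive distances $\le\gamma$''---does not close the induction: contracting a member of that class produces a chord of length up to $2\gamma$, which already leaves the class, and the statement you would need ($2\delta<\gamma$ preserves the class) is only true for the first couple of shortcuts starting from a fine cycle. No bookkeeping of this kind can work, because any induction that only \emph{deletes} points must eventually bridge two far-apart regions with a single edge.

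The missing idea is that one must \emph{subdivide} rather than contract, and induct on the spatial size of the cycle rather than on the number of its points. The paper proves the statements $H_s$: every fine cycle contained in a ball of radius $s\delta$ has zero weight. The base case $s=3$ is close to your base case: all points lie within $\gamma=3\delta$ of a point $y$ in some region $\degion_{k_0}$, so Lemma~\ref{lem:prop of tiling of Rn} gives $\degion_{k_0}\bigcap\degion_{k_i}\bigcap\degion_{k_{i+1}}\ne\emptyset$ for every edge, and the weight telescopes to zero via Lemma~\ref{lem:D is curl free}. For the inductive step, a fine cycle in a ball of radius $s\delta$ is written as a superposition of the small cycles $\big(y,z_i^1,\ldots,z_i^s,x_i,x_{i+1},z_{i+1}^s,\ldots,z_{i+1}^1,y\big)$, where the $z_i^j$ are equidistant points (chosen in region interiors) on the segment from a central point $y$ to $x_i$; each such cycle stays fine and fits in a ball of radius $(s-1)\delta$, so $H_{s-1}$ applies, while the radial paths are traversed twice in opposite directions and cancel by $b_{ij}+b_{ji}=0$. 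If you want to salvage your outline, you would have to replace each long chord by a fine path joining its endpoints and show the weight along that path equals what the telescoping needs---at which point you have essentially reconstructed the paper's radial-subdivision argument.
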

The proof is given in Appendix~\ref{app:proof stokes}. In the proof, we first leverage Lemmas~\ref{lem:D is curl free} and~\ref{lem:prop of tiling of Rn} to show that every   fine cycle of ``small size'' (i.e., contained inside a small ball) has zero weight. We then
represent  a given fine cycle  as  a superposition of multiple fine cycles of smaller  sizes, and use  induction.


Using Lemma~\ref{lem:map graph}, we can now associate a weight $b_i$ to each  region $\degion_i$, as follows. 
Let $b_1=0$. For any other  region $\degion_i$, let $b_i$ be the weight of an arbitrary  fine path 
$x_1,\ldots,x_t$, with $x_1\in \degion_i$ and $x_t\in \degion_1$.
Then, using Lemma~\ref{lem:map graph}, it is not hard to see that
$b_i$ is independent of the choice of the fine path, and is therefore well-defined.
Moreover, for any pair of adjacent regions $\degion_i$ and $\degion_j$, we have 
\begin{equation} \label{eq:bij is bi-bj}
b_{ij} = b_i-b_j.
\end{equation}

\hide{
\comm{I have not yet looked at the proof of the next lemma, but the story looks convoluted. I suspect that similar arguments are repeated in proving parts (a) and (b).
I think that a natural way to go is to first define
$$\Phit(x)=\max_j\, (- \mu_j^Tx+b_j )$$ and then state the lemma as
$$\max_j\, (- \mu_j^Tx+b_j ) = - \mu_i^Tx+b_i$$
if and only if $x\in\degion_i$. (In words, $\degion_i$ is precisely the set of $x$ on which $\mu_i$ attains the maximum. If you agree, can you draft a proof in the most economical way? All of the proof is to be in the appendix.} \oli{OK. Please see the new lemma and the new shorter proof below.}

\comm{NOTE: It is not clear whether the ``max'' is/should be taken over all $i$ or just over those $i$ for which $\degion_i$ is nonempty. Of course, the two are equivalent, but this needs a short argument, and can affect the flow of the proof of the lemma. Some statements, e.g., part (b) of the lemma are true only if it is a restricted maximum.} \oli{Right. But we are now in  part (a) of the theorem. Note that $\I$ belongs to Part (d).}
}
We now define a potential function $\Phi:\R^n\to\R$,  by letting
\begin{equation} \label{eq:def phi based on -mu , b}  
\Phit(x)=\max_{i} \, (- \mu_i^Tx+b_i ).
\end{equation} 

\begin{lemma}[Properties of $\Phi$] \label{lem:Phi is well defined and convex}
For any $x\in\R^n$, we have $\Phit(x) =  -\mu_i^Tx+b_i$ if and only if $x\in\degion_i$.  
\end{lemma}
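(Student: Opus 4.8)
\textbf{Proof proposal for Lemma~\ref{lem:Phi is well defined and convex}.}

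The plan is to prove the two implications separately, and the key input will be the relation \eqref{eq:bij is bi-bj}, namely $b_{ij} = b_i - b_j$ for adjacent regions, together with the definition \eqref{eq:def of bij in proof of th pwc} of $b_{ij}$ as a supremum. First I would record the immediate reformulation: for $x\in\degion_i$ (so that $i$ is a ``candidate index'' at $x$), showing $\Phit(x) = -\mu_i^Tx + b_i$ is the same as showing $-\mu_j^Tx + b_j \le -\mu_i^Tx + b_i$ for every $j$, i.e., $(\mu_i - \mu_j)^Tx \le b_i - b_j$ for every $j$. So the content of the ``if'' direction is precisely the inequality $(\mu_i-\mu_j)^Tx \le b_i - b_j$ for all $i,j$ and all $x\in\degion_i$, and the ``only if'' direction amounts to showing this inequality is strict whenever $x\in\degion_i$ but $x\notin\degion_j$ — or rather, that if $x\notin\degion_i$ then some other index strictly beats $i$.

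For the \textbf{``if'' direction}, fix $x\in\degion_i$ and an arbitrary index $j$. The difficulty is that $\degion_i$ and $\degion_j$ need not be adjacent, so \eqref{eq:bij is bi-bj} does not apply directly; I would bridge them using a fine path. Take a fine path $x = x_1, x_2, \ldots, x_t$ from (a point near) $x$ in $\degion_i$ to a point in $\degion_j$; such a path exists since $\R^n$ is connected and regions have nonempty interior, and along it consecutive regions $\degion_{k_\ell}, \degion_{k_{\ell+1}}$ are adjacent (or equal) by the footnote following Definition~\ref{def:path}. Summing the adjacent-region inequalities $(\mu_{k_\ell} - \mu_{k_{\ell+1}})^T x_\ell \le b_{k_\ell} - b_{k_{\ell+1}}$ would telescope on the right-hand side to $b_i - b_j$, but the left-hand side does not telescope because the evaluation points $x_\ell$ vary. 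The cleaner route is to avoid paths here and instead prove directly that $(\mu_i - \mu_j)^Tx \le b_i - b_j$ for \emph{all} pairs $i,j$ (not just adjacent ones) and all $x\in\degion_i$: this is exactly the statement that $-\partial\Phi$-type consistency holds, and I expect it follows from a separate lemma or from combining \eqref{eq:def of bij in proof of th pwc}, \eqref{eq:bij is bi-bj}, and a path argument applied to the \emph{weights} rather than pointwise. Concretely, I would show $\sup_{x\in\degion_i}(\mu_i-\mu_j)^Tx \le b_i - b_j$ for all $i,j$ by: when $\degion_i,\degion_j$ adjacent it is \eqref{eq:def of bij in proof of th pwc} plus \eqref{eq:bij is bi-bj}; for the general case, use that $\mu_i - \mu_j = \sum_\ell (\mu_{k_\ell} - \mu_{k_{\ell+1}})$ along any chain of adjacent regions from $i$ to $j$, and that $\sup_{x\in\degion_i}(\mu_{k_\ell}-\mu_{k_{\ell+1}})^Tx \le b_{k_\ell k_{\ell+1}}$ will need the regions along the chain to all be reachable — this is where I anticipate the \textbf{main obstacle}: controlling $(\mu_i-\mu_j)^Tx$ for $x\in\degion_i$ when $\degion_j$ is far away, since the supremum defining $b_{ij}$ is over $\degion_i$ but the chain decomposition introduces suprema over intermediate regions.

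For the \textbf{``only if'' direction}, suppose $\Phit(x) = -\mu_i^Tx + b_i$; I must show $x\in\degion_i$. Equivalently, I show the contrapositive is blocked: if $x\notin\degion_i$, then $-\mu_i^Tx+b_i < \Phit(x)$. Since $\bigcup_k\degion_k = \R^n$, there is some $k$ with $x\in\degion_k$, and by the ``if'' direction already established, $\Phit(x) = -\mu_k^Tx + b_k$. So it suffices to show that for $x\in\degion_k$ with $k\neq i$ and $x\notin\degion_i$, we have the \emph{strict} inequality $(\mu_i - \mu_k)^Tx < b_i - b_k = b_{ik}$ (when $i,k$ are adjacent), i.e., the supremum in \eqref{eq:def of bij in proof of th pwc} defining $b_{ki} = \sup_{x'\in\degion_k}(\mu_k-\mu_i)^Tx'$ is \emph{not} attained at our particular $x$. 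Here I would use that $\degion_i$ is a polyhedron with nonempty interior and $x\notin\degion_i$: a supporting-hyperplane / strict-separation argument, combined with the characterization (from Lemma~\ref{lem:D is curl free}, whose proof shows the sup is attained on the boundary $\degion_i\cap\degion_k$) that the maximizer of $(\mu_k-\mu_i)^Tx'$ over $\degion_k$ lies in $\degion_i\cap\degion_k$. Since $x\notin\degion_i$, $x$ is not such a maximizer, giving strictness. If $i,k$ are not adjacent, one first argues $\degion_i\cap\degion_k=\emptyset$ forces an even larger gap via an intermediate region. I would also double-check the edge case where several indices tie in the max at $x$; then $x$ must lie in the intersection of all the tying regions, and consistency is exactly Lemma~\ref{lem:D is curl free}(b). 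Overall I expect the ``if'' direction's chain-of-adjacency supremum bookkeeping to be the crux, and everything else to be separation arguments on polyhedra.
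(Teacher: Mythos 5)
Your reduction of the lemma to the family of inequalities $(\mu_i-\mu_j)^Tx\le b_i-b_j$ for $x\in\degion_i$ is the right starting point, and you correctly diagnose why naive telescoping along a chain of adjacent regions fails (the evaluation points vary). But you then leave precisely this point unresolved: the ``cleaner route'' you propose hinges on the bound $\sup_{x\in\degion_i}\big(\mu_{k_\ell}-\mu_{k_{\ell+1}}\big)^Tx\le b_{k_\ell k_{\ell+1}}$, which you yourself flag as the main obstacle, and which is false in general, since $b_{k_\ell k_{\ell+1}}$ is a supremum over $\degion_{k_\ell}$ while the same linear functional may even be unbounded above on a different region $\degion_i$. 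For instance, for $\Phit(x)=\max(0,x_1,x_2)$ (so $\mu_1=(0,0)$, $\mu_2=(-1,0)$, $\mu_3=(0,-1)$, all $b_i=0$) one has $b_{12}=\sup_{x\in\degion_1}x_1=0$, whereas $\sup_{x\in\degion_3}(\mu_1-\mu_2)^Tx=\sup_{\{x_2\ge\max(0,x_1)\}}x_1=+\infty$. The paper's proof supplies exactly the device your proposal is missing: it takes the intermediate points $x_1,\dots,x_k$ on the straight segment from $x\in\degion_i$ to a point $y\in\degion_j$, uses the per-step inequality at $x_l$ together with the reversed-roles inequality at $x_{l+1}$ to get $-\big(\mu^{l}-\mu^{l+1}\big)^T\big(x_l-x_{l+1}\big)\ge 0$, and then exploits collinearity ($x_l-x_{l+1}$ is a positive multiple of $x_1-x_l$) to shift every evaluation point back to $x_1=x$; only after this shift does the sum telescope to $-\mu_i^Tx+b_i\ge-\mu_j^Tx+b_j$. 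Without this (or some equivalent mechanism) your ``if'' direction does not go through.

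The ``only if'' direction as you outline it also has a gap. Your strictness argument needs the claim that \emph{every} maximizer of $(\mu_k-\mu_i)^T(\cdot)$ over $\degion_k$ lies in $\degion_k\cap\degion_i$; the proof of Lemma~\ref{lem:D is curl free} gives only the opposite inclusion (points of the intersection attain the supremum), and the inclusion you need is essentially equivalent to the lemma being proved (for $x\in\degion_k$, $(\mu_k-\mu_i)^Tx=b_{ki}$ holds exactly when $-\mu_i^Tx+b_i=\Phit(x)$), so invoking it here is circular; the non-adjacent case is likewise only gestured at. The paper instead proves an intermediate claim -- $i$ is the \emph{unique} maximizer if and only if $x\in\degion_i^\circ$ -- by applying the already-established ``if'' direction at the perturbed point $y=x+\epsilon(\mu_i-\mu_j)$, which yields the strict gap $\epsilon\|\mu_i-\mu_j\|^2>0$; note this is where the distinctness of the drifts $\mu_i$ enters (the lemma fails without it, and your proposal never uses it). The general ``only if'' statement then follows by moving from $x$ toward an interior point of $\degion_i$ along a segment, applying the claim at each intermediate point, and using closedness of $\degion_i$. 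You would need to replace your separation/argmax argument with something of this kind.
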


The proof is given in Appendix~\ref{subsec:phi is convex} 
and relies on a delicate interplay between the definition of weights $b_i$ and $b_{ij}$ and the properties of polyhedral tilings. 
Furthermore, it also relies on the  requirement that the vectors $\mu_i$ be distinct; cf.~Definition~\ref{def:hybrid}; the lemma fails to hold otherwise.

Let us fix some $x\in\R^n$. 
Since $\Phit$ is convex, it has a subdifferential at $x$, and 
\begin{equation} \label{eq:subdiff phit in F}
-\partial \Phit(x) \,=\,   \conv\left\{\mu_i\,\big|\,  i\in \operatorname{argmax}_j \, (- \mu_j^Tx+b_j )\right\}    \,=\,  \conv\left\{\mu_i \mid x\in \degion_i\right\} \,=\, F(x).
\end{equation}
where the first equality is from the subdifferential formula for the pointwise maximum functions \cite[Section 3.1.1]{Bert15}, the second equality follows from Lemma~\ref{lem:Phi is well defined and convex}, and the last equality is  the definition of $F$. 
Therefore, the polyhedral hybrid system $F$ coincides with the FPCS system associated with the particular potential function $\Phit$ that we introduced.

This completes the proof of one direction in Part (a) of the Theorem.
The proof of the converse statement, that FPCS systems are polyhedral hybrid systems, is fairly straightforward. It only requires us to check a few details, which is done  
 in Appendix~\ref{sec:easy-proof}.

\hide{
\comm{Somehow prove that $\degion_i$, $i\in I$ covers all of $\R^n$.
\\
Then prove that if $i\neq j$ and $i,j\in I$, then $\degion_i$, $\degion_j$ have disjoint interiors.
\\
This proves that $\degion_i$, $i\in I$ is a polyhedral tiling.}

\comm{Argue that 
$$\Phit'(x)=\max_{i\in I} \, (- \mu_i^Tx+b_i )$$
is the same as $\Phit(x)$. This can be done by showing that in the interiors of the regions $\degion_i$, the two functions coincide. Using continuity, the two functions must coincide everywhere.}

\comm{Argue that tiling $\degion_i$, $i\in I$, together with the vectors $\mu_i$, give a polyhedral hybrid system that coincides with the subgradient of $\Phi'_t$}

\dela{Some of the statements in  Theorem  \ref{th:pwc} are straightforward. We dispense with them first.}
}
\medskip	
\subsection{\bf Part (b): Polyhedral hybrid (or FPCS) systems are well-formed finite-partition systems}
This result is just a simple observation. The details are as follows. 
Consider a polyhedral hybrid system $F$,  with $F(x) = \conv\big(\left\{\mu_i \mid x\in \degion_i\right\}\big)$,
and fix some particular $x$. 
Recall that every polyhedral region $\degion_j$ is closed and that $\bigcup_j \degion_j=\R^n$.
Therefore, 
$\mathcal{U} \triangleq \R ^n \backslash \bigcup_{j \,:\, x\not\in\degion_j}\degion_j$ is an open neighbourhood of $x$ contained in $\bigcup_{i \,:\, x\in\degion_i}\degion_i$. 
It follows that for every  $y\in \mathcal{U}$, we have  $\{\mu_i \mid y\in\degion_i\} \subseteq \{\mu_i \mid x\in\degion_i\}$, and as a result,
$F(y) = {\rm Conv} \big( \{\mu_i \mid y\in\degion_i\} \big)
\subseteq {\rm Conv}\big(\{\mu_i \mid x\in\degion_i\}\big) = F(x)$. 
Therefore,
 $F$ is upper semicontinuous. 
The bound on $\|F(x)\|$ is immediate because $F(x)$ is a subset of the convex hull of $\mu_1,\ldots,\mu_m$.
Finally, each $F(x)$ is the convex hull of finitely many vectors, and is therefore closed and convex. It follows that the system is well-formed. Furthermore, each $F(x)$ contains at least one of the vectors $\mu_i$, which implies that we have  a well-formed finite-partition system. 

Given that FPCS systems are equivalent to hybrid systems, FPCS systems are also well-formed finite-partition systems, and this concludes the proof of Part (b).


\subsection{Part (c): A finite-partition system has the same trajectories as a polyhedral hybrid system} \label{subsec:proof part d}

In this subsection, we  show that
any well-formed nonexpansive finite-partition system $F$ is associated with  a polyhedral hybrid system $F'$ that has the  same trajectories. The key step involves showing that under the nonexpansiveness assumption, the regions associated with a finite-partition system are essentially polyhedral.

We fix a well-formed and nonexpansive
 finite-partition dynamical system $F$ and an associated finite collection of  distinct vectors $ \mu_1,\ldots, \mu_m $  such that for any $x\in\R^n$, there exists some $\mu_i$ for which $\mu_i\in F(x)$.
 We define
a collection of regions,
\begin{equation} \label{eq:def region i}
\region_i \triangleq \left\{ x\in\R^n\,\big|\,  \mu_i\in F(x) \right\},\qquad i=1,\ldots, m.
\end{equation}
Our first lemma asserts that these regions are closed.

\begin{lemma} \label{lem:Ri closed}
The set $\region_i$ is closed, for every $i$.
\end{lemma}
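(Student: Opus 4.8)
The plan is to show that the complement of $\region_i$ is open, using the upper semicontinuity of $F$ guaranteed by well-formedness. Fix $i$ and take a point $x \notin \region_i$; by definition $\mu_i \notin F(x)$. Since $F(x)$ is closed and convex (well-formedness, part (c)) and $\mu_i$ is a single point not in it, the separating hyperplane theorem gives an open convex set $\mathcal{V} \supseteq F(x)$ with $\mu_i \notin \mathcal{V}$ — for instance, take an open half-space strictly separating $\mu_i$ from $F(x)$, or a small open enlargement of $F(x)$ that still excludes $\mu_i$, which exists because $\mathrm{dist}(\mu_i, F(x)) > 0$. By upper semicontinuity of $F$ at $x$, there is an open neighbourhood $\mathcal{U}$ of $x$ such that $F(y) \subseteq \mathcal{V}$ for all $y \in \mathcal{U}$. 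Then $\mu_i \notin F(y)$ for every $y \in \mathcal{U}$, i.e., $\mathcal{U} \cap \region_i = \emptyset$. Hence $\R^n \setminus \region_i$ is open, and $\region_i$ is closed.

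I expect no real obstacle here: this is purely a formal consequence of the definitions and does not use nonexpansiveness at all. The one point requiring a line of care is the construction of the open set $\mathcal{V}$ separating $\mu_i$ from $F(x)$; this is where closedness (and finiteness of $\|F(x)\|$, or at least closedness) of $F(x)$ is used, so that $\mathrm{dist}(\mu_i, F(x)) > 0$ and a suitable open $\mathcal{V}$ exists. (Convexity of $F(x)$ is not even strictly needed — closedness suffices — but it is available in any case.)

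An alternative, essentially equivalent route is sequential: suppose $x_k \to x$ with $x_k \in \region_i$, so $\mu_i \in F(x_k)$ for all $k$. If $\mu_i \notin F(x)$, pick $\mathcal{V}$ open with $F(x) \subseteq \mathcal{V}$, $\mu_i \notin \mathcal{V}$; upper semicontinuity gives $\mathcal{U} \ni x$ with $F(y) \subseteq \mathcal{V}$ on $\mathcal{U}$, and for $k$ large $x_k \in \mathcal{U}$, contradicting $\mu_i \in F(x_k)$. Thus $\mu_i \in F(x)$ and $x \in \region_i$. I would present the first (open-complement) version as the main argument since it is the cleanest.
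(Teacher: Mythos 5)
Your proof is correct and follows essentially the same route as the paper: show the complement of $\region_i$ is open by using closedness of $F(x)$ to find an open set $\mathcal{V}\supseteq F(x)$ excluding $\mu_i$, then invoke upper semicontinuity to obtain a neighbourhood $\mathcal{U}$ of $x$ disjoint from $\region_i$. Your added remarks (that convexity and nonexpansiveness are not needed, and the sequential variant) are accurate but not essential.
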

The proof is based on the well-formedness of $F$, and is given in Appendix~\ref{sec:proof lem Ri closed}. 
We now consider another collection $\degion_1\,\ldots,\degion_m$ of regions, defined by
\begin{equation}\label{eq:def degion i in the proof of th pwc}
\degion_i\triangleq\textrm{closure}\big(\region_i^\circ\big), \quad i=1,\ldots,m,
\end{equation}
where $\region_i^\circ$ is the interior of $\region_i$. Then, Lemma~\ref{lem:Ri closed} implies that 
\begin{equation}\label{eq:degin subset region}
\degion_i\subseteq \region_i, \quad i=1,\ldots,m.
\end{equation}
The regions $\region_i$ can be fairly unstructured, For example, they may have multiple isolated points. On the other hand, the regions $\degion_i$ are much better behaved, as stated in the next lemma. We let $\mathcal{I}$ be the set of indices for which $\degion_i$ is nonempty (equivalently, $\region_i$ has nonempty interior).

\begin{lemma}[Regions $\degion_i$ Form a Polyhedral Tiling] \label{lem:Di is finite seg}
The regions  $\degion_i$, $i\in \mathcal{I}$, defined in \eqref{eq:def degion i in the proof of th pwc}, form a polyhedral tiling. 
\end{lemma}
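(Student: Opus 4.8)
\textbf{Proof plan for Lemma~\ref{lem:Di is finite seg}.}

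The plan is to verify the three defining properties of a polyhedral tiling (Definition~\ref{eq:def:tiling}) for the collection $\{\degion_i : i\in\mathcal I\}$: that the regions cover $\R^n$, that each has nonempty interior, and that distinct regions have disjoint interiors. The first thing to establish, and the crux of the whole argument, is that each $\region_i$ is \emph{convex}; once this is known, $\region_i$ is a closed convex set by Lemma~\ref{lem:Ri closed}, its interior is convex, and $\degion_i=\textrm{closure}(\region_i^\circ)$ is a closed convex set. A closed convex set that is one of finitely many such sets covering $\R^n$ and having full-dimensional interior must be a polyhedron — intuitively, its boundary can only be ``supported'' by the boundaries of the other finitely many regions, so it is cut out by finitely many hyperplanes; I would make this precise by arguing that any boundary point of $\degion_i$ lies in $\region_j$ for some $j\ne i$, and that near such a point $\degion_i$ is locally the intersection of finitely many halfspaces. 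Thus the main obstacle is convexity of $\region_i$, and that is where nonexpansiveness must enter.

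To prove $\region_i$ is convex, take $x,y\in\region_i^\circ$ (it suffices to work with the interior and then pass to closures) and a point $z$ on the segment $[x,y]$; I want to show $\mu_i\in F(z)$. The idea is to use the drift $\mu_i$ to run all three points in parallel: since $\mu_i\in F(x)$ and $\mu_i\in F(y)$ and $F$ is well-formed (hence locally trajectories exist by Theorem~4.3 of \cite{Stew11}), consider trajectories started at $x$, $y$, $z$. The nonexpansiveness of $F$, applied to the pair of trajectories through $x$ and through $y$ — which, at least for a short time while both remain in the open set where $\mu_i$ is an available drift, move rigidly with velocity $\mu_i$ — combined with the fact that $z$ lies \emph{between} $x$ and $y$ on a straight line, pins the trajectory through $z$: if it deviated from velocity $\mu_i$, the distances $\Ltwo{x(t)-z(t)}+\Ltwo{z(t)-y(t)}$ would initially exceed $\Ltwo{x(t)-y(t)}=\Ltwo{x-y}$ by the strict triangle inequality, while nonexpansiveness forces each of the three pairwise distances to be nonincreasing, a contradiction with $z$ being collinear and between $x$ and $y$. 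Hence $z(t)=z+t\mu_i$ for small $t$, so $\mu_i\in F(z)$ (using that $F$ is closed-valued and upper semicontinuous to pass to the limit $t\downarrow 0$), giving $z\in\region_i$.

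Having established convexity, I would finish as follows. Property~(a), the covering property: by definition of a finite-partition system every $x$ has some $\mu_i\in F(x)$, so $\bigcup_i\region_i=\R^n$; since the $\region_i$ are closed, convex, finitely many, and cover $\R^n$, a standard Baire-category / dimension argument shows $\bigcup_{i\in\mathcal I}\region_i^\circ$ is dense, hence $\bigcup_{i\in\mathcal I}\degion_i=\bigcup_{i\in\mathcal I}\textrm{closure}(\region_i^\circ)=\R^n$. Property~(b) holds by the very definition of $\mathcal I$. For property~(c), suppose $i\ne j$ with $i,j\in\mathcal I$ and the interiors of $\degion_i$ and $\degion_j$ met; then there is an open set $U$ with $\mu_i,\mu_j\in F(x)$ for all $x\in U$, and in particular $F(x)\supseteq\{\mu_i,\mu_j\}$ throughout $U$. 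Running two parallel trajectories with the \emph{same} starting point but choosing drift $\mu_i$ for one and $\mu_j$ for the other (both legal while inside $U$) gives two trajectories that separate at rate $\Ltwo{\mu_i-\mu_j}>0$ — the $\mu_i$ are distinct — contradicting nonexpansiveness (which forces uniqueness of trajectories). Therefore the interiors are disjoint, and finally each $\degion_i$ being closed, convex, full-dimensional, and part of a finite cover of $\R^n$ forces it to be polyhedral, completing the verification that $\{\degion_i:i\in\mathcal I\}$ is a polyhedral tiling.
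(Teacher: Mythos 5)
Your route is viable and genuinely different from the paper's, and its dynamical core is sound, but it has two soft spots. The paper never proves convexity as a separate step and never invokes the general fact that a finite cover by convex bodies is polyhedral. Instead, it uses the local constant-drift trajectories (exactly as you do) to derive the monotonicity inequality $(\mu_i-\mu_j)^T(x_i-x_j)\le 0$ for $x_i\in\degion_i$, $x_j\in\degion_j$, then defines $d_{ij}=\inf_{x_j\in\degion_j}(\mu_i-\mu_j)^Tx_j$ and the explicit polyhedra $\pp_i=\bigcap_{j\ne i}\{x\mid(\mu_i-\mu_j)^Tx\le d_{ij}\}$, and shows $\degion_i=\pp_i$; convexity, polyhedrality, and disjointness of interiors all drop out of that single identity. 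Your three-trajectory squeeze (equality in the triangle inequality pins $z(t)=z+t\mu_i$, hence $\mu_i\in F(z+t\mu_i)$ for a.e.\ small $t$, hence $z\in\region_i$ by closedness) is a correct and rather elegant alternative way to get convexity of $\degion_i$, and your uniqueness-of-trajectories argument for disjoint interiors and the Baire-type covering argument match the paper in substance.

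Two caveats. First, your headline claim that each $\region_i$ is convex is false in general: as the paper remarks, the sets $\region_i$ can be quite unstructured (a redundant drift vector $\mu_k$ belonging to $F(x)$ only at a few isolated points makes $\region_k$ disconnected). Your actual argument only uses $x,y\in\region_i^\circ$, which is all you need for $\degion_i=\textrm{closure}(\region_i^\circ)$, so the claim should be restated for segments between interior points. Second, the final step --- ``finitely many closed convex sets with nonempty, pairwise disjoint interiors covering $\R^n$ must be polyhedra'' --- is true, but it is precisely the step the paper engineered its proof to avoid, and your sketch (local halfspaces near boundary points plus an unstated local-to-global argument) is the thinnest part of the proposal. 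The clean completion is global rather than local: for each $j\ne i$ separate the disjoint open convex sets $\degion_i^\circ$ and $\degion_j^\circ$ by a hyperplane, let $Q_i$ be the intersection of the closed halfspaces containing $\degion_i$, observe that the intersection of the corresponding open halfspaces avoids every $\degion_j$ with $j\ne i$ and hence lies in $\degion_i$ by the covering property, and that it is dense in $Q_i$, so $Q_i\subseteq\degion_i\subseteq Q_i$. This is exactly the paper's $\pp_i$ argument with abstract separating hyperplanes in place of the explicit ones $(\mu_i-\mu_j)^Tx=d_{ij}$; the two approaches thus converge at the end, with the paper's explicit hyperplanes buying a shorter path (no separate convexity lemma), and yours buying a more conceptual split (dynamics gives convexity, geometry gives polyhedrality) at the cost of having to supply this last geometric lemma in full.
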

The proof is given in Appendix~\ref{subsec:proof Di is finite seg}. 

\hide{
To any tiling we associate a \emph{map-graph}:
	
\begin{definition}[Map-Graphs] \label{def:map graph}
Given a tiling $\degion_i$, $i=1,\ldots,m$, of $\R^n$, 
the associated map-graph is  a directed graph of $m$ nodes, in which a pair $(i,j)$ of nodes have a directed link $ij$ in between if $\degion_i\bigcap\degion_j \ne \emptyset$.
\end{definition}
Let $G$ be the map-graph of the tiling $\degion$  in \eqref{eq:def fn D}.
 To each directed  link $ij$ of $G$ we associate a weight $b_{ij}$, 
}

We now define a system $F'$, by letting,  $F'(x) = \conv\left\{\mu_i\,\big|\, x\in \degion_i, \,i\in\I\right\}$, for every $x$.   From Lemma~\ref{lem:Di is finite seg},  the regions $\degion_i$ form a polyhedral tiling. As a result, $F'$ is a polyhedral hybrid system.
On the other hand, \eqref{eq:degin subset region} implies that  if $x\in \degion_i$, then
$x\in\region_i$, i.e., 
 $\mu_i\in F(x)$. 
Since $F$ is well-formed, $F(x)$ is convex. 
Therefore, $F(x)$ contains the convex hull $\conv\left\{\mu_i\,\big|\, x\in \degion_i\right\}$,
and we have established that $F'(x)\subseteq F(x)$, for all $x\in\R^n$, as desired.

To complete the proof of Part (d), it remains to show that the two systems, $F$ and $F'$, have the same sets of trajectories.
Since $F'(x)\subseteq F(x)$, for all $x$, it follows that any trajectory of $F'$ is also a  trajectory of $F$. 
For the reverse direction, let $x(\cdot)$ be a trajectory of $F$, initialized at some $x(0)$. 
Let   $y(\cdot)$ be a trajectory of $F'$ with the same initial condition, $y(0)=x(0)$. 
(General existence results for well-formed systems guarantee that such a trajectory $y(\cdot)$ exists; cf.~ Theorem~4.3 of \cite{Stew11}\pur{)}.
The inclusion $F'\subseteq F$ implies that $y(\cdot)$ is also a trajectory of $F$.  
However, since $F$ is nonexpansive,  it has a unique trajectory with initial condition $x(0)$.  
Therefore, $y(t)=x(t)$, for all $t\ge 0$, and $x(\cdot)$ is also a trajectory of $F'$.
This completes the proof of Part (c) of the theorem.

\hide{
The high level idea is to use the well-formedness and nonexpansive properties  to show that the closure of the interior of each region in the partition is a convex set.
Each region  is associated with a vector $\mu_i$ in $\actionset$ introduced in Definition \ref{def:finite-partition}. 
We then show that there exist constants $\{b_i\}$, such that $\Phi(x)=\max\big(-mu_i^T x + b_i \big)$ satisfies (\ref{eq:finite-partiotion has subdiff inside}). To do this, we consider a weighted \emph{map graph} of the regions in the partition, 
 and prove a combinatorial analog of Green's theorem from calculus, for the case of weighted map-graphs. We then use this result to find a potential function over the map graph, which is finally used to define coefficients $\{ b_i\}$ and construct the convex potential function $\Phi$.
 }
\medskip


\section{\bf Discussion}   \label{sec:discussion}

In this section, we discuss our result and its implications.
We also discuss some open problems and directions for future research.

\subsection{Our Result} \label{s:result}
We have established that nonexpansive polyhedral hybrid systems are conservative, and in particular they follow the (negative)
subgradient flow of a piecewise linear convex potential function, with finitely many pieces. We also showed that the same is true for a seemingly more general class of finite-partition systems.
One consequence of our results is that previously established upper bounds on the sensitivity of FPCS systems to additive external perturbations \cite{AlTG19sensitivity} now carry over to nonexpansive polyhedral hybrid systems. 

The finiteness of the number of constant-drift regions is central to this result, because 
in its absence, a well-formed nonexpansive dynamical system need not be conservative (cf.~Footnote~3).
In this respect, our result  is quite counter-intuitive: every smooth dynamical system over a compact domain can be approximated by 
a polyhedral hybrid system, 
with arbitrarily high accuracy over a bounded time interval. Hence, our result might suggest that every nonexpansive smooth system must be conservative, but this is not the case. 
This apparent contradiction is resolved by observing that such a finite approximation will not in general  preserve the nonexpansiveness property.

Our result  is analogous to a corollary of Stokes' theorem  \cite{Stew07}, that any curl-free vector field is conservative.
In our context, the curl-free property is replaced by a local conservation property  
of nonexpansive polyhedral hybrid systems; cf.~Lemma~\ref{lem:D is curl free}(b). 
 What is somewhat surprising however is that there is nothing in our assumptions that suggests  such a curl-free property. Instead, this property emerges through the delicate interplay  between the finiteness and the nonexpansiveness assumptions. 
In the same spirit, the polyhedral shape of the different regions in a finite-partition system is not assumed but emerges in an unexpected way from seemingly unrelated assumptions.

\hide{
\subsection{Implications}

Other than deepening our understanding of finite-partition dynamical systems, the proposed  reduction to the simpler and well-studied subgradient systems facilitates the analysis of nonexpansive finite-partition systems.
In particular, previously established upper bounds on the sensitivity of FPCS systems to additive external \del{disturbances} \red{perturbations} \cite{AlTG19sensitivity} now carry over to finite-partition systems, under some natural assumptions such as uniqueness of \del{disturbed} \red{perturbed} solutions.

Another implication of Theorem~\ref{th:pwc} concerns the approximability of nonexpansive dynamical systems by nonexpansive finite-partition systems.
Given a dynamical system $F$, we say that a  system $G$ is an $(\epsilon,T)$-approximation of $F$ if for any trajectory $x(\cdot)$ of $F$, there exists a trajectory $y(\cdot)$ of $G$ such that 
\begin{equation}
\Ltwo{x(t)-y(t)}\le \epsilon,  \qquad \forall t\le T.
\end{equation}
It is not difficult to see that for any nonexpansive dynamical system $F$ over a compact domain, and any $\epsilon,T > 0$, there exists a finite-partition system $G$ that provides  an $(\epsilon,T)$-approximation of $F$.
However, as already hinted in Section \ref{s:result}, 
 such a system $G$  cannot  be guaranteed to be nonexpansive, as this would lead to a contradiction.
\hide{Fig. \ref{fig:approximation} shows a nonexpansive dynamical system $F$ over the unit Euclidean ball. For this system,   and for sufficiently small values of $\epsilon$ and sufficiently large values of $T$,  no $(\epsilon,T)$-approximation of $F$ can be both finite-partition and nonexpansive.}
In summary, every nonexpansive dynamical system over a compact domain admits approximating finite-partition systems of arbitrary accuracy,
while there are examples of nonexpansive dynamical systems that do not admit arbitrarily close approximations by \emph{nonexpansive} finite-partition systems.

\hide{
\begin{figure*} 
\begin{center}
\includegraphics[width = .45\textwidth]{fig_approx.pdf}
\ifIEEE
\vspace{-1cm}
\fi
\end{center}
\caption{A trajectory of the nonexpansive dynamical system $\dot{x}=Ax$ with $A$ defined in \eqref{eq:matrix A in the example}.
The figure shows a pair $x(0)$ and $x(T)$ of points on the trajectory, a line $l$ orthogonal to the trajectory at  $x(0)$, and its corresponding half-plane $W$ (the highlighted region).
If $F$ was a subgradient field, then $W$ would have contained the sub-level-set of the potential function that touched $x(0)$. 
Then, $x(t)$ would have been contained in $W$, for all $t\ge 0$.
However, $x(T)$ lies outside of $W$.
It follows that  $F$ cannot be a subgradient field.
Moreover, for sufficiently small $\epsilon$, $F$ does not have any  $(\epsilon , T)$-approximating subgradient system.
Then, Theorem~\ref{th:pwc} implies that there exists no $(\epsilon , T)$-approximating nonexpansive finite-partition system.
}
\label{fig:approximation} 
\end{figure*}
}
}


\subsection{Open Problems}

Our result  suggests several possible generalizations and open problems for future research,
which we list below.
\begin{itemize}
\item[a)] Does the result generalize to the case where the set $\actionset$ in the definition of finite-partition systems (Definition \ref{def:finite-partition}) is allowed to be countably infinite?
Or if we assume that we have a polyhedral tiling with a countably infinite number of regions $\degion_i$, would some additional assumptions be needed?

\item[b)] {What if we relax the constant drift property in the interior of each region?}
More concretely, consider a nonexpansive dynamical system $F$, and a polyhedral tiling of  $\R^n$  and assume that the restriction of $F$ to each  region is a (negative) subgradient field. Does this imply that $F$ is a (negative) subgradient field?
\item[c)] What if we consider a finite-partition system defined only on a convex (but not necessarily polyhedral) subset of $\R^n$? We conjecture that the result remains valid, but a different proof seems to be needed. 
\end{itemize}
\hide{
\oli{Let me think more about these problems for the next revision to make sure that they  don't have simple solutions.}

\oli{One promising condition is: ``every compact set intersects with a finite number of regions''. I believe that our theorem is good under this condition.}

\oli{I also need to review the relevant literature, if any...}

\comm{Sentence removed from the intro, that could be inserted later in the discussion:
It is worth mentioning however that 
there exist counterexamples that show that 
such a strong sensitivity bound 
fails to hold for the larger class of all
nonexpansive conservative systems \cite{AlTG19mis}. In particular, the finiteness assumption on the number of regions is crucial for such bounds to hold.}
}

\old{
\comm{For our paper [8], let us give pointer to Arxiv instead. And add ``submitted''} \oli{Done! Also added ``TAC''. OK?}
\comm{I do not see "TAC",  I am probably working with an outdated .bib file.\\
Some stylistic comments on references. \\
Journal titles should be roman. \oli{I have no control over that.} Book titles and journal titles should be italic and capitalized.\\ 
In [1], remove "Belmont\\
In [2], the title of the paper should be lower case. \oli{But [2] is a book...}\\
"jit" should be "JIT"} \oli{Updates the bib file to comply with the above standards.}}



\bibliographystyle{siamplain}
\bibliography{schbib}


\ifIEEE
\newpage
\appendices
\else
\appendix
\medskip
\medskip
\medskip
\fi

\old{
\section{\red{The $\mu_i$ can be assumed distinct}}\label{lem:distinct}
\red{Consider a polyhedral hybrid system $F$, specified by regions $\degion_i$ and corresponding drifts $\mu_i$, $i=1,\ldots,m$. According to Definition~\ref{def:hybrid}, some of the drifts $\mu_i$ could be the same. In this appendix, we establish that this assumption can be removed without loss of generality. In particular, we will show that there exists a polyhedral hybrid system $F'$, specified by a polyehdral tiling $\degion'_1,\ldots,\degion_m'$ and corresponding drifts $\mu'_1,\ldots,\mu'_{m'}$, such that: (i) $F(x)=F'(x)$, for all $x$, (ii) The drifts $\mu'_i$ are distinct.

The argument is as follows. Let $\mu'_1,\ldots,\mu_{m'}$ be the distinct values of the drifts $\mu_i$ in the original system $f$. For $j=1,\ldots,m'$, let 
$$\degion'_j=\bigcup_{i: \mu_i=\mu_j'}\degion_i.$$
We first establish that the new regions form a polyehdral tiling. The union of the regions $\degion'_j$ is the same as the union of the regions $\degion_i$, and is therefore all of $\R^n$. Each region $\degion'_j$ contains at least one region $\degion_i$; since the regions $\degion_i$ have nonempty interior, the same is true for the regions $\degion_j'$. Finally, 
suppose that the interiors of two regions $\degion'_j$, $\degion'_l$, with $l\neq j$, have a common element $x$. By perturbing $x$, we can obtain a new element $x'$ which is in the interior of both $\degion'_j$ and $\degion'_l$, and which does not belong to the (lower-dimensional) boundary of any of original polyhedra $\degion_i$. But this can happen only if $x'$ belongs to the interior of two different original regions $\degion_i$, which is impossible because of condition (c) in  Definition~\ref{eq:def:tiling}.

We have so far verified conditions (a)-(c) in Definition~\ref{eq:def:tiling}. It remains to show that the regions $\degion'_j$ are polyhedral. Since these regions are unions of polyhedra $\degion_i$, we only need to show that each $\degion'_j$ is convex. Suppose not. Then, there exists some $j$ and some distinct $x,y$ in the interior of $\degion'_j$ such that $(x+y)/2$ is not in $\degion_j'$. By slightly perturbing $x$ and $y$, we can and will assume that $x$, $y$, and $(x+y)/2$ are interior elements of the original regions $\degion_i$. Assume, without loss of generality that $x\in\degion_1$, $y\in\degion_2$, and $(x+y)/2\in\degion_3$. Since $x$ and $y$ belong to the same region $\degion'_j$, it follows that there exists some $\mu$ such that $\mu\in F(z)$ for all $z$ close enough to either $x$ or $y$. 
\comm{Please complete the proof - unless you have a shorter way of writing it up.}	
\comm{Perhaps this can go a bit later, as long as it before the appendix where it is used. This way, could maybe use some inequalities that are proved in between - 
}}
}

\section{\bf Proof of Lemma \ref{lem:D is curl free}} \label{subsec:G is curl free}
We consider a polyhedral hybrid system $F$, as in \eqref{eq:Fx}, and the weights defined in \eqref{eq:def of bij in proof of th pwc}. We start with the proof of Part (a) of the lemma.

Let $\degion_i^\circ$ stand for the interior of $\degion_i$, which is assumed nonempty for all $i$; cf.~Definition~\ref{eq:def:tiling}(b). 
Let $x_i\in\degion_i^\circ$ and $x_j\in\degion_j^\circ$,
 and let $x_i(\cdot)$  and $x_j(\cdot)$ be two trajectories with initial conditions $x_i(0)=x_i$ and $x_j(0)=x_j$.  
Then, at time $t=0$, we have $\dot{x}_i(0) = \mu_i$ and $\dot{x}_j({0}) = \mu_j$. Using the nonexpansive property of the dynamics, we obtain
\begin{equation} \label{eq:moved}
\begin{split}
\big(\mu_i-\mu_j\big)^T\big(x_i-x_j\big) \, &=\, \big(\dot{x}_i(0) -\dot{x}_j(0) \big)^T\big(x_i(0)-x_j(0)\big)\\
&  =\, \frac{1}{2}\cdot\frac{d^+}{dt} \Ltwo{x_i(t)-x_j(t)}^2 \Big|_{t=0}\\
& \le\, 0.
\end{split}
\end{equation}
By taking the limit in \eqref{eq:moved} along sequences in  $\degion_i^\circ$ and $\degion_j^\circ$ that converge to
$x_i$ and $x_j$, respectively, we conclude that~\eqref{eq:moved} holds for every $x_i\in\degion_i$ and $x_j\in\degion_j$. 

Consider two adjacent regions $\degion_i$ and $\degion_j$, and a point $x$ in their intersection.  
Since $x\in\degion_j$, it follows from \eqref{eq:moved} that if $x_i\in\degion_i$, 
then $\big(\mu_i-\mu_j\big)^Tx_i\le  \big(\mu_i-\mu_j\big)^Tx$.
Therefore, 
\begin{equation}
b_{ij} \,=\, \sup_{x_i\in\degion_i} \big(\mu_i-\mu_j\big)^Tx_i  \,\le\,\big(\mu_i-\mu_j\big)^Tx
<\infty.
\end{equation}
On the other hand, since $x\in\degion_i$, 
\begin{equation}
b_{ij} \,=\, \sup_{x_i\in\degion_i} \big(\mu_i-\mu_j\big)^Tx_i  \,\ge\,\big(\mu_i-\mu_j\big)^Tx.
\end{equation}
Therefore, for any $x\in\degion_i\bigcap\degion_j$,
\begin{equation} \label{eq:mu dif x at boundary equals bij}
b_{ij} \,=\, \big( \mu_i -\mu_j \big) ^T x.
\end{equation}
Switching the roles of $i$ and $j$, we obtain
\begin{equation} \label{eq: bij and bji ineq}
b_{ji} \,=\, \big( \mu_j -\mu_i \big) ^T x\,  =\, -\big( \mu_i -\mu_j \big) ^Tx \, = \, -b_{ij}.
\end{equation} 
This establishes the first part of the lemma.

For the second part,
consider a triple $i,j,k$, such that $\degion_i\bigcap \degion_j\bigcap \degion_k\ne\emptyset$. 
If two of these indices are equal, e.g., if $i=j\neq k$, then the first part of the lemma implies that
$$b_{ij}+b_{jk}+b_{ki}=b_{ii}+b_{ik}+b_{ki}=b_{ik}+b_{ki}=0.$$
If the three indices are distinct, then fix some
$x\in \degion_i\bigcap \degion_j\bigcap \degion_k$. It follows from (\ref{eq:mu dif x at boundary equals bij}) that
\begin{equation}
b_{ij}+b_{jk}+b_{ki}\, =\, \big(\mu_i-\mu_j\big)^Tx + \big(\mu_j-\mu_k\big)^Tx + \big(\mu_k-\mu_i\big)^Tx \,=\, 0.
\end{equation}
This completes the proof of the lemma.
\hfill \qed

\medskip


\section{\bf Proof of Lemma \ref{lem:prop of tiling of Rn}} \label{app:proof 3ball}
Consider three regions $\degion^{1}$, $\degion^{2}$, and $\degion^{3}$, with empty intersection, i.e., $\degion^{1}\bigcap \degion^{2}\bigcap \degion^{3}=\emptyset$.
Since the regions $\degion_i$, $i=1,2,3$, are polyhedra, there exist matrices
$A^i$, all rows of which have unit norm, and vectors $b^i$ such that 
\begin{equation}
\degion^i = \big\{x\in \R^n\,\big|\, A^i x-b^i \preceq 0\big\},\qquad i=1,2,3,
\end{equation}
where $\preceq$ stands for componentwise inequality.

Consider the linear programming problem with variables $\epsilon$ and $x$, of minimizing $\epsilon$ subject to
\begin{equation} \label{eq:lp}
A^ix-b^i\preceq \epsilon\ones^i, \quad i=1,2,3,
\end{equation}
where $\ones^i$ is a vector with the same dimension as $b^i$ and with all entries equal to one.
For any $x\in\R^n$, there is a large enough $\epsilon$ such that $(\epsilon,x)$ is a feasible point of \eqref{eq:lp}. 
Therefore,  \eqref{eq:lp} has a nonempty set of feasible points.
Moreover, since $\degion^{1}\bigcap \degion^{2}\bigcap \degion^{3}$ is empty, for any feasible point $(\epsilon,x)$ of \eqref{eq:lp}, we have $\epsilon>0$.
Given that we are dealing with a linear programming problem with a nonempty feasible region and 
with 
finite optimal cost, the optimal value of the objective is attained, and must be a positive number, which we denote by $\epsilon^*$.
Hence,
 for any $x\in\R^n$, the constraint $A^ix-b^i\preceq (\epsilon^*/2) \ones^i$ is violated, for at least one $i\in\big\{1,2,3\big\}$.
Since the rows of $A$ have unit norm, it follows that the closed $(\epsilon^*/2)$-neighbourhood of $x$ does not intersect $\degion^i$.
The desired result  follows by letting $\gamma$ be equal to the minimum of the constants $\epsilon^*/2$ over all triples $\degion^1,\degion^2,\degion^3$ that have empty intersection.

\medskip

\section{\bf Proof of Lemma \ref{lem:map graph}} \label{app:proof stokes}
We will use induction on $s$ to establish the following sequence of statements, for $s=3,4,\ldots$:
\begin{equation*}
H_s:\, \textrm{``Every fine cycle contained in a Euclidean ball of radius } s\delta \textrm{,  has zero weight''}.
\end{equation*}

\noindent
{\bf Base case} ($s=3$){:\bf}
Consider a fine cycle $x_1,\ldots, x_t$, and suppose that it is contained in the closed $3\delta$\pur{-}neighbourhood of a point $y\in\R^n$. 
Let $\degion_{k_0}$ be a region that contains $y$.
Since $3\delta=\gamma$, Lemma~\ref{lem:prop of tiling of Rn} implies that $\degion_{k_0}\bigcap\degion_{k_i}\bigcap\degion_{k_{i+1}}$ is nonempty, for $i=1,\ldots, t-1$.
Then, it follows from Lemma~\ref{lem:D is curl free} that
\begin{eqnarray}
b_{k_0k_i} +  b_{k_ik_0} &=& 0,\quad i=1,\ldots, n-1,   \label{eq:2 cycle with y =0 base}\\
 b_{k_0k_i} + b_{k_ik_{i+1}} + b_{k_{i+1}k_0} &=& 0,\quad i=1,\ldots, n-1. \label{eq:3 cycle with y =0 base}
\end{eqnarray} 
Therefore,
\begin{eqnarray*}
W\big(x_1,\ldots, x_t\big) &=& \sum_{i=1}^{t-1} b_{k_ik_{i+1}}\\
&=& \sum_{i=1}^{t-1} \Big( b_{k_0k_i} + b_{k_ik_{i+1}} + b_{k_ik_0}  \Big)\\
&=& \sum_{i=1}^{t-1} \Big( b_{k_0k_i} + b_{k_ik_{i+1}} + b_{k_{i+1}k_0}  \Big)\\
&=& 0,
\end{eqnarray*}
where the equalities follow from the definition of the weight $W(\cdot)$, \eqref{eq:2 cycle with y =0 base}, $x_n=x_1$, and \eqref{eq:3 cycle with y =0 base}, respectively.

\vspace{5pt}
\noindent 
{\bf Induction step:}
We now fix some $s\geq 4$ and assume the induction hypothesis that $H_{s-1}$ is true.
We will show that $H_s$ is also true, and
 do this by decomposing a fine cycle into fine cycles of smaller radii.
 Consider a fine cycle $x_1\ldots,x_t$ that is
 contained in a ball of radius $s\delta$.
Let $y$ be a point such that $\Ltwo{y-x_i}<(s+1)\delta$, for $i=1,\ldots,t-1$.
For any $i\le t-1$, consider a sequence $z_i^1,\ldots,z_i^s$ of equidistant points on the line segment between $y$ and $x_i$; see Fig.~\ref{fig:equidistant} for an illustration.
In view of the freedom in the choice of $y$, we can and will assume that $y$ is in the interior of some region $\degion_{k_0}$ and each $z_j^i$ is in the interior of a region $\degion_{k_j^i}$. 
Then, for any $i\le t-1$, the sequence 
$$\xi^i =\big(y,z_i^1,\ldots,z_i^s,x_i,x_{i+1},z_{i+1}^s, \ldots,z_{i+1}^1,y\big)$$ is a fine cycle.

\begin{figure*} 
\begin{center}
\includegraphics[width = .45\textwidth]{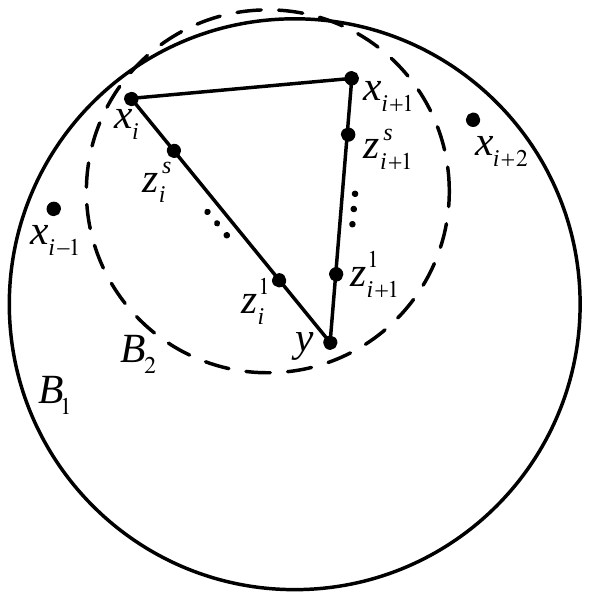}
\ifIEEE
\vspace{-1cm}
\fi
\end{center}
\caption{Decomposition of a fine cycle $x_1,\ldots,x_t$ into smaller fine cycles, in the proof of Lemma~\ref{lem:map graph}. Here the cycle $x_1,\ldots,x_t$ is contained in a ball $B_1$ of radius $s\delta$. The figure shows one of the cycles in the decomposition $y,z_i^1,\ldots,z_i^s,x_i,x_{i+1},z_{i+1}^s, \ldots,z_{i+1}^1,y$, which is contained in a smaller ball $B_2$, of radius at most $(s-1)\delta$. }
\label{fig:equidistant}
\end{figure*}

Moreover, the triangle with vertices $y$, $x_i$, and $x_{i+1}$ has two edges of length at most $(s+1)\delta$ and one edge, $x_ix_{i+1}$, of length at most $\delta$. Therefore, this triangle is contained in a closed ball of radius
\begin{equation} \label{eq:s+1+1/2  le s-1}
\frac{(s+1)\delta+\delta}{2} \,=\,\left( \frac{s }{2}+1\right)\,\delta \,\le\, (s-1)\delta,
\end{equation}
where the inequality holds because $s\ge 4$.
Then, the induction hypothesis $H_{s-1}$ implies that the cycle $\xi^i$ has zero weight.

Consider now a cycle $\xi$ which is the concatenation of the cycles $\xi^1,\ldots,\xi^{t-1}$.
The weight of $\xi$ equals the sum of the weights of the cycles $\xi^i$, and is therefore zero.
Finally, note that $\xi$ consists of the arcs of the original cycle $x_1,\ldots,x_{t-1}$, together with the intermediate paths that join $y$ to $x_i$, for the different $i$. For every $i$, the path from $y$ to $x_i$ is traversed twice, in opposite directions. 
Hence, using the property $b_{ij}+b_{ji}=0$ of the weights, the total contribution of these intermediate paths is zero. 
It follows that the weight of the original cycle $x_1,\ldots,x_{t-1}$ is equal to the weight of $\xi$, which is zero.

\medskip

\section{\bf Proof of Lemma \ref{lem:Phi is well defined and convex}} \label{subsec:phi is convex}

%
We consider the function $\Phit$ defined by $\Phit(x)=\max_{i} \, (- \mu_i^Tx+b_i )$.
We first prove the ``if'' direction of the result. That is, we fix some $x\in \degion_i$ and show that  $\Phit(x) =  -\mu_i^Tx+b_i$; equivalently, that
\begin{equation}\label{eq:if x in d then its a max}
-\mu_i^Tx+b_i \ge -\mu_{j}^Tx+b_{j},\qquad {\mbox{for all } x\in\degion_i \mbox{ and all }j.}
\end{equation}

Fix some $j\ne i$, and a point $y\in \degion_{j}$. 
Given that $\degion_j$ has nonempty interior, which is disjoint from $\degion_i$, we can and will assume that $y\neq x$.
By tracing the different regions crossed by the segment $xy$, it is not hard to see that we can find a finite sequence of \emph{distinct} points $x_1,\ldots,x_k$ and associated regions $\degion^{1},\ldots,\degion^{k}$ such that:
\begin{itemize}
\item[(i)]
 $x_1=x$ and $x_k=y$;
\item[(ii)] 
$x_{l}\in \degion^{l}$, for $l=1,\ldots,k$; in particular, 
 $\degion^{1}=\degion_i$ and $\degion^{k}=\degion_j$;
\item[(ii)]  the regions $\degion^{l}$ and $\degion^{l+1}$ are adjacent, for $l=1,\ldots,k-1$;  
\end{itemize}

We let $\mu^l$ and $b^l$ be the drift vector and weight, respectively, associated with region $\degion^l$. Then, 
for $l=1,\ldots,k-1$, we have $x_l\in\D^{l}$, and it follows from  \eqref{eq:bij is bi-bj} and \eqref{eq:def of bij in proof of th pwc} that
\begin{equation*}
\begin{split}
b^{l} - b^{l+1} \,  = \, \sup_{z\in\degion^{l}}\big(\mu^{l} -\mu^{l+1}  \big)^T z
\,\ge\,   \big(\mu^l -\mu^{l+1}  \big)^T x_l.
\end{split}
\end{equation*}
Equivalently, 
\begin{equation} \label{eq:j j+1 ineq}
-\big(\mu^{l} -\mu^{l+1}  \big)^T x_l + \big(b^{l} - b^{l+1}\big) \ge 0.
\end{equation}
Interchanging the roles of $i_l$ and $i_{l+1}$, we obtain
\begin{equation*}
-\big(\mu^{l+1} -\mu^l  \big)^T x_{l+1} 
+  \big(b^{l+1} - b^{l}\big) \ge 0.
\end{equation*}
Adding the last two inequalities, we get
\begin{equation*} 
-\big( \mu^{l}-\mu^{l+1}  \big)^T \big(x_l - x_{l+1}\big) \ge 0.
\end{equation*}
Since $x_{l+1}$ lies further along the line segment that connects $x_1$ to $x_k$, we see that
  $x_l - x_{l+1}$ is a \emph{positive} multiple of $x_1 - x_l$. Consequently, if $l<k$, then
\begin{equation} \label{eq:ij - ij+1 ineq}
-\big(\mu^{l} -\mu^{l+1}  \big)^T \big(x_1 - x_l\big) \ge 0.
\end{equation}
Adding \eqref{eq:j j+1 ineq} and \eqref{eq:ij - ij+1 ineq}, we obtain
\begin{equation*}
-\big(\mu^l -\mu^{l+1}  \big)^T x_1 + \big(b^{l} - b^{l+1}\big) \ge 0.
\end{equation*}
It follows that
\begin{equation*} 
\begin{split}
-\mu_i^Tx +b_i \,&=\, -(\mu^{1})^Tx_1 +b^{1} \\ 
& = \, \sum_{l=1}^{k-1} \Big(  -\big(\mu^{l} -\mu^{l+1}  \big)^T x_1 + \big(b^{l} - b^{l+1}\big) \Big)  -(\mu^{k})^Tx_1 +b^{k}\\
& \ge \, -(\mu^{k})^Tx_1 +b^{k}\\
& =\,  -\mu_j^Tx +b_j.
\end{split}
\end{equation*}
Recall that $x$ is an arbitrary element of $\degion_i$ and that $j$ is an arbitrary index. 
This establishes Eq.~\eqref{eq:if x in d then its a max} and concludes the proof of the ``if'' direction of the lemma.

We now establish the converse (``only if'') direction. 
We have already shown that for $x\in \degion_i$, $i$ is a maximizing index in the definition of $\Phit$. The proof of the converse starts by establishing a stronger statement: in the interior of $\degion_i$, the maximizing index is  unique. (The assumption that the $\mu_i$ are distinct will have to be invoked  here.)

%

\begin{claim} \label{claim:unique maximizer in the interior}
Fix some $x\in\R^n$. The index $i$ attains the maximum (over $j$) in the expression $\max_j \, (- \mu_j^Tx+b_j )$ and is the unique maximizer if and only if $x\in\degion_i^\circ$. 
\end{claim}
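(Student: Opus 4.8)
\textbf{Proof plan for Claim~\ref{claim:unique maximizer in the interior}.}

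The plan is to combine the already-established ``if'' direction of Lemma~\ref{lem:Phi is well defined and convex} with a dimension/genericity argument that forces uniqueness on the interior, and then to read off the converse from a covering argument. First I would prove the easy implication: if $x\in\degion_i^\circ$, then $i$ attains the maximum (this is immediate from Eq.~\eqref{eq:if x in d then its a max}, which we already have), and moreover $i$ is the \emph{unique} maximizer. For uniqueness, suppose for contradiction that some other index $j\ne i$ also attains the maximum at $x$, i.e.\ $-\mu_i^Tx+b_i = -\mu_j^Tx+b_j$. Since $x$ is interior to $\degion_i$, a whole ball $B$ around $x$ lies in $\degion_i$, and on all of $B$ we have $-\mu_i^Tz+b_i \ge -\mu_j^Tz+b_j$ by Eq.~\eqref{eq:if x in d then its a max}. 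A linear function that is $\le 0$ on a ball and equals $0$ at the center must be identically $0$ on that ball, hence $\mu_i=\mu_j$ and $b_i=b_j$. Because the vectors $\mu_1,\dots,\mu_m$ are distinct (Definition~\ref{def:hybrid}), this is a contradiction. This is the step where the distinctness hypothesis is essential, exactly as flagged in the lemma statement.

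Next I would handle the converse: if $i$ is the unique maximizer at $x$, then $x\in\degion_i^\circ$. Here I would first use the ``if'' direction to locate $x$: since $\bigcup_j\degion_j=\R^n$, we have $x\in\degion_k$ for some $k$, and then Eq.~\eqref{eq:if x in d then its a max} shows $k$ attains the maximum at $x$; uniqueness of the maximizer then forces $k=i$, so $x\in\degion_i$. It remains to upgrade $x\in\degion_i$ to $x\in\degion_i^\circ$. Suppose instead $x\in\partial\degion_i$. Then every neighborhood of $x$ meets $\R^n\setminus\degion_i$, hence (again using $\bigcup_j\degion_j=\R^n$ and that each $\degion_j$ is closed) every neighborhood of $x$ meets some $\degion_j$ with $j\ne i$; since there are finitely many regions, there is a fixed $j\ne i$ and a sequence $y_\ell\to x$ with $y_\ell\in\degion_j$. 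By the ``if'' direction applied at each $y_\ell$, the index $j$ attains the maximum at $y_\ell$, i.e.\ $-\mu_j^Ty_\ell+b_j \ge -\mu_i^Ty_\ell+b_i$; passing to the limit $\ell\to\infty$ by continuity of these linear functions gives $-\mu_j^Tx+b_j \ge -\mu_i^Tx+b_i$. Combined with the fact that $i$ is the (unique) maximizer, this yields $-\mu_j^Tx+b_j = -\mu_i^Tx+b_i$ with $j\ne i$, contradicting uniqueness. Hence $x\in\degion_i^\circ$, which also retroactively confirms the first uniqueness claim is consistent.

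I expect the main obstacle to be the careful bookkeeping in the converse direction — specifically, justifying that a point $x$ which maximizes uniquely cannot lie on the boundary of $\degion_i$. The subtlety is that a priori $x$ might lie in $\degion_i$ but also be a limit of points of a neighboring region $\degion_j$; the resolution hinges on the closedness of the regions (so limits stay inside $\degion_j$) together with the finiteness of the tiling (so one can extract a subsequence staying in a single $\degion_j$) and the continuity of linear functions (so the inequality $-\mu_j^Tz+b_j\ge -\mu_i^Tz+b_i$ survives the limit). Once these three ingredients are in place, the contradiction with uniqueness is immediate, and the equivalence follows. The distinctness of the $\mu_i$ is used only in the forward uniqueness argument, but it is genuinely indispensable there.
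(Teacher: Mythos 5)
Your proof is correct and follows essentially the same route as the paper: the uniqueness-on-the-interior step is the paper's argument in disguise (your "affine function $\le 0$ on a ball, zero at the center" observation is exactly the paper's perturbation $y=x+\epsilon(\mu_i-\mu_j)\in\degion_i$, and both hinge on the distinctness of the $\mu_i$), and the converse uses the same ingredients — Eq.~\eqref{eq:if x in d then its a max}, closedness of the regions, and the covering property. The only difference is cosmetic: where you rule out $x\in\partial\degion_i$ by a finiteness/pigeonhole sequence argument, the paper gets the same conclusion in one step by noting that uniqueness forces $x\notin\degion_j$ for all $j\ne i$, so $x$ lies in the open set $\R^n\setminus\bigcup_{j\ne i}\degion_j\subseteq\degion_i^\circ$ (and indeed, in your own argument, closedness of $\degion_j$ already gives $x\in\degion_j$ directly, making the limit of inequalities unnecessary).
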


\begin{proof}[Proof of the Claim]
To establish the ``only if''  direction, suppose that $i$ is the unique maximizer of $\max_j \, (- \mu_j^Tx+b_j )$; equivalently
 $- \mu_j^Tx+b_j < \Phit(x)$, for all $j\ne i$. 
Then, Eq.~\eqref{eq:if x in d then its a max} implies that $x\not\in\degion_j$, for  all $j\ne i$.
Since $\bigcup_j \degion_j=\R^n$ and each $\degion_j$ is closed, it follows that $x\in \R^n \backslash \bigcup_{j\ne i} \degion_j = \degion_i^\circ$.

To establish the ``if'' direction, fix an index $i$ and a point $x\in\degion_i^\circ$. 
Take an arbitrary $j\ne i$ and a small enough $\epsilon$ such that $y\triangleq x+\epsilon (\mu_i-\mu_j)\in \degion_i$.
Then, 
\begin{equation}\label{eq:x in interior implies strict inequality}
\begin{split}
 -\mu_i^T x + b_i \,& =\, -\mu_i^T y + b_i + \epsilon\mu_i^T  (\mu_i-\mu_j) \\
&\ge\, -\mu_j^T y + b_j + \epsilon\mu_i^T  (\mu_i-\mu_j) \\
&=\,  -\mu_j^T x + b_j + \epsilon (\mu_i-\mu_j)^T  (\mu_i-\mu_j)\\
&=\,  -\mu_j^T x + b_j + \epsilon \Ltwo{\mu_i-\mu_j}^2\\
& >\, -\mu_j^T x + b_j ,
\end{split}
\end{equation}
where the first inequality is due to \eqref{eq:if x in d then its a max} after interchanging the indices $i$ and $j$, and the last inequality holds because the $\mu_i$ are distinct. 
Since Eq. \eqref{eq:x in interior implies strict inequality} holds for every $j\ne i$, it follows that 
$i$ is the unique maximizer of $\max_j \, (- \mu_j^Tx+b_j )$, and the claim follows.
\end{proof}

Suppose now that   $\Phit(x) =  -\mu_i^Tx+b_i$; in particular, 
\begin{equation}\label{eq:d1copy}
-\mu_i^Tx+b_i \geq
-\mu_j^Tx+b_j,
\end{equation}
for all $j$.
We need to show that $x\in \degion_i$.
From the definition of a polyhedral tiling, $\degion_i$ has nonempty interior. 
Fix an arbitrary $y\in \degion_i^\circ$.
Claim \ref{claim:unique maximizer in the interior} implies that  for any $j\ne i$,
\begin{equation}\label{eq: i j inequality at y}
 -\mu_i^T y + b_i > \, -\mu_j^T y + b_j .
\end{equation}
For $\epsilon\in(0,1)$, let $z_\epsilon \triangleq (1-\epsilon)x+\epsilon y$. 
It follows from Eqs.~\eqref{eq:d1copy} and \eqref{eq: i j inequality at y} that for any $\epsilon\in(0,1)$,
\begin{equation}\label{eq: i j inequality at z}
 -\mu_i^T z_\epsilon + b_i > \, -\mu_j^T z_\epsilon + b_j , \qquad \forall\  j\ne i.
\end{equation}
Employing Claim~\ref{claim:unique maximizer in the interior} once again, it follows from \eqref{eq: i j inequality at z} that for any $\epsilon\in(0,1)$, $z_\epsilon \in \degion_i^\circ$. 
Thus, as $\epsilon$ goes to zero,  $z_\epsilon$ converges to $x$ from within the interior of $\degion_i$.
Since $\degion_i$ is closed, we conclude that $x\in\degion_i$, and the lemma follows.


\section{Proof of the converse direction of Part (a) of Theorem~\ref{th:pwc}}
\label{sec:easy-proof}
Consider an FPCS system, with $F(x)=-\partial \Phit(x)$, where
$$\Phit(x)=\max_{i=1,\ldots,m} \, (- \mu_i^Tx+b_i ),$$
and where the pairs $(\mu_i,b_i)$ are distinct.

For any $i$, let $\degion_i=\{x\mid \Phit(x) = - \mu_i^Tx+b_i\}$, which is a polyhedron.
Clearly, the union of the sets $\degion_i$ is all of $\R^n$. 
Let $\I$ be the set of all $i$ for which the polyhedron $\degion_i$ has nonempty interior.
Then, $\R^n \backslash \bigcup_{i\in\I} \degion_i$ is an open subset of the empty interior set $ \bigcup_{i\not\in\I} \degion_i$, and is therefore empty. 
It follows that $ \bigcup_{i\in\I} \degion_i = \R^n$. 
Also note that for any two regions $\degion_i$ and $\degion_j$, their interiors $\degion_i^\circ = \{x\mid - \mu_i^Tx+b_i >- \mu_k^Tx+b_k ,\, \forall k\ne i \}$ and $\degion_j^\circ = \{x\mid - \mu_j^Tx+b_j >- \mu_k^Tx+b_k ,\, \forall k\ne j \}$ are disjoint. 
This proves that the sets $\degion_i$, $i\in \I$, comprise a polyhedral tiling.

Let $\Phit'(x)=\max_{i\in I} \, (- \mu_i^Tx+b_i )$. Then, $\Phit'(x)\le\Phit(x)$, for all $x$. On the other hand, since $ \bigcup_{i\in\I} \degion_i = \R^n$, any $x\in\R^n$ lies in $\degion_i$ for some $i\in \I$. Thus, for $x\in\degion_i$, we have $\Phit(x) =  - \mu_i^Tx+b_i \le \max_{j\in \I} \, (- \mu_j^Tx+b_j ) = \Phit'(x)$. We conclude that, $\Phit'(x) = \Phit(x)$, for all $x\in\R^n$.

Note that if $\mu_j=\mu_i$ and
(without loss of generality) $b_i<b_j$, then $-\mu_i^Tx+b_i < -\mu^T_j x+b_j$ for all $x$, so that $i$ cannot attain the maximum in the definition of $\Phi$, and $i\notin\degion_i$. This shows that when we restrict to indices $i$ in $\I$, the corresponding vectors $\mu_i$ must be distinct, as required in the definition of polyhedral hybrid systems.

Finally, it follows form the subdifferential formula for the pointwise maximum of linear functions \cite[Section 3.1.1]{Bert15} that $-\partial \Phit(x) = -\partial \Phit'(x) = {\rm Conv}\big(\{\mu_i \mid x\in\degion_i, \,i\in \I\}\big)$. Therefore, $-\partial \Phit(x)$ is the polyhedral hybrid system associated with the regions $\degion_i$ and the vectors $\mu_i$, for $i\in\I$.


\section{\bf Proof of Lemma \ref{lem:Ri closed}} \label{sec:proof lem Ri closed}
Let us consider a region $\region_i$.
We will prove the equivalent statement that the complement of $\region_i$ is open.
 Let us fix some
$x\not\in\region_i$. 
From the definition of $\region_i$, the assumption $x\not\in\region_i$ implies that $\mu_i\not\in F(x)$. 
Moreover, since $F$ is well-formed, $F(x)$ is  closed.  
Therefore, there exists an open subset $\mathcal{V}$ of $\R^n$ such that $F(x)\subseteq \mathcal{V}$ and $\mu_i\not\in \mathcal{V}$. 
It then follows from the upper-semicontinuity of $F(\cdot)$ that there exists an open neighbourhood $\mathcal{U}$ of $x$ such that $F(y)\subseteq \mathcal{V}$, for all $y\in \mathcal{U}$.
Therefore, $\mu_i\not\in  F(y)$, for all $y\in \mathcal{U}$.
Equivalently, $y\notin\region_i$, for all $y\in  \mathcal{U}$. Hence, for any $x$ in the complement of $\region_i$, there exists an open neighborhood of $x$ which is contained in the complement of $\region_i$. This shows that the complement of $\region_i$ is open, and concludes the proof.


\section{\bf Proof of Lemma \ref{lem:Di is finite seg}} \label{subsec:proof Di is finite seg}
We will show that the regions  $\degion_i\triangleq\textrm{closure}\big(\region_i^\circ\big)$,  $i=1,\ldots,m$, defined in \eqref{eq:def degion i in the proof of th pwc}, 
and restricting to those regions that are nonempty (i.e., with $i\in\I$)
form a tiling. 

From the finite-partition property,  for every $x$ there exists some $i$ such that $\mu_i\in F(x)$.
It follows that
every $x$ must belong to some $\region_i$
(cf.~Equation \eqref{eq:def region i}), and $\bigcup_i \region_i =\R^n$. Thus,
\begin{equation}\label{eq:g1}
\begin{split}
\R^n  \,\backslash\Big( \,\bigcup_i  \degion_i \Big)\, & =\, \Big(\bigcup_i \region_i   
\Big) \,\backslash \,   \Big(\bigcup_i \degion_i\Big)\\
&\subseteq \,\bigcup_i \big(\region_i  \,\backslash \,  \degion_i  \big)\\
&\subseteq \,\bigcup_i \big(\region_i  \,\backslash \, \region_i^\circ  \big).
\end{split}
\end{equation}
Each $\region_i \, \backslash\, \region_i^\circ$ has empty interior. It follows that
$\bigcup_i \big(\region_i  \,\backslash \, \region_i^\circ  \big)$ 
has empty interior as well. 
From the definition $\degion_i  = \textrm{closure}\big(\region_i^\circ\big)$, each set $\degion_i$ is automatically closed.
Hence $\R^n  \backslash \big(\bigcup_i  \degion_i\big)$ is open. At the same time, from \eqref{eq:g1}, the latter set is  contained  in the empty-interior set  $\bigcup_i \big(\region_i  \,\backslash \, \region_i^\circ  \big)$. This implies that $\R^n  \backslash \big(\bigcup_i  \degion_i\big)$ is empty, and therefore
\begin{equation}\label{eq:G2}
\bigcup_i \degion_i = \R^n.
\end{equation}




Note that for every $i$, $\degion_i\supseteq \region_i^\circ$. Thus, any interior point of $\region_i$ is also an interior point of $\degion_i$. In particular, if $\degion_i$ has empty interior, then $\region_i^\circ$ is empty, and so is its closure, $\degion_i$. It follows that whenever $\degion_i$ is nonempty (formally, when $i\in\I$), $\degion_i$ has nonempty interior. 
From now on, we restrict attention to nonempty regions, $\degion_i$, i.e., with $i\in\I$.

If there is a single nonempty region $\degion_i$, then that region is all of $\R^n$ and is therefore closed and convex. Suppose now that there are multiple nonempty regions, and let us fix some distinct $i,j\in\I$. 
 Let $x_i\in\degion_i^\circ$ and $x_j\in\degion_j^\circ$, and let $x_i(\cdot)$  and $x_j(\cdot)$ be two trajectories with initial conditions $x_i(0)=x_i$ and $x_j(0)=x_j$.  
It can be seen that for small positive times $t$,  $x_i+\mu_i t$ is a possible trajectory; in fact, because of the nonexpansive property, it is the unique trajectory. It follows that at time $t=0$, we have $\dot{x}_i(0) = \mu_i$ and, similarly, $\dot{x}_j({0}) = \mu_j$. 
Therefore,
\begin{equation*} 
\begin{split}
\big(\mu_i-\mu_j\big)^T\big(x_i-x_j\big) \, &=\, \big(\dot{x}_i(0) -\dot{x}_j(0) \big)^T\big(x_i(0)-x_j(0)\big)\\
&  =\, \frac{1}{2}\cdot\frac{d^+}{dt} \Ltwo{x_i(t)-x_j(t)}^2 \Big|_{t=0}.
\end{split}
\end{equation*}
Using the nonexpansive property of the dynamics, we conclude that
\begin{equation} \label{eq:mui -muj xi - xj le 0}
\big(\mu_i-\mu_j\big)^T\big(x_i-x_j\big) \, \le\, 0.
\end{equation}
Since $\degion_i = \textrm{closure}\big(\region_i^\circ\big) 
	= \textrm{closure}\big(\degion_i^\circ\big)$,
for any $x_i\in\degion_i$, there exists a sequence of points in the interior of $\degion_i$ that converges to $x_i$. 
 For any $x_i\in\degion_i$ and $x_j\in\degion_j$, the inequality \eqref{eq:mui -muj xi - xj le 0} holds along
such sequences of points converging to $x_i$ and $x_j$. 
Hence,  \eqref{eq:mui -muj xi - xj le 0} also holds for every $x_i\in\degion_i$ and $x_j\in\degion_j$.

Consider again some distinct $i,j\in\I$, and  let
\begin{equation} \label{eq:def bij for degion}
d_{ij} \,\triangleq\, \inf_{x_j\in \degion_j}  \big(\mu_i-\mu_j\big)^T x_j.
\end{equation}
Let us fix some $x_i\in \degion_i$. It follows from  \eqref{eq:mui -muj xi - xj le 0}   that   
as $x_j$ ranges over $\degion_j$, the expression 
$\big(\mu_i-\mu_j\big)^T x_j$  is lower bounded by $\big(\mu_i-\mu_j\big)^T x_i$. It follows that  the infimum in Eq.~\eqref{eq:def bij for degion}, and  therefore $d_{ij}$ as well
 is finite.  In particular, this justifies the second equality in the following calculation:
\begin{eqnarray} \label{eq:bij+bji ge0}
d_{ij} + d_{ji} &=&
\inf_{ x_j\in \degion_j} \big(\mu_i-\mu_j\big)^T x_j
+
\inf_{ x_i\in \degion_i} \big(\mu_j-\mu_i\big)^T x_i\\
&=&
\inf_{\substack{x_i\in \degion_i \\ x_j\in \degion_j}} \big(\mu_i-\mu_j\big)^T \big(x_j-x_i\big) \\ &
=& -\sup_{\substack{x_i\in \degion_i \\ x_j\in \degion_j}} \big(\mu_i-\mu_j\big)^T \big(x_i-x_j\big) \\
& \ge & 0,
\end{eqnarray}
where the inequality is due to \eqref{eq:mui -muj xi - xj le 0}.

For every $i\in\I$,  we define the polyhedron 
\begin{equation}
\pp_i = \bigcap_{j\ne i,\, j\in\I} \big\{ x\, \mid\,  \big(\mu_i-\mu_j\big)^T x \le  d_{ij} \big\}.
\end{equation}
Suppose that $x_i\in \degion_i$. Then, for every $j\in\I$, with $j\neq i$, and any $x_j\in\degion_j$,
\eqref{eq:mui -muj xi - xj le 0} yields
$(\mu_i-\mu_j)^T x_i \leq (\mu_i-\mu_j)^T x_j$. Taking the infimum over all $x_j\in\degion_j$, and using the definition of $d_{ij}$, we obtain $(\mu_i-\mu_j)^T x_i\leq d_{ij}$. Since this is true for every such $j$, and comparing with the definition of $\pp_i$, we obtain
\begin{equation} \label{eq:degion in pp}
\degion_i\subseteq \pp_i,
\end{equation}
for all $i$. 
It then follows from  \eqref{eq:G2}  that $\bigcup_i \pp_i = \R^n$.

\hide{
For distinct $i$ and $j$, it follows from \eqref{eq:def bij for degion} that if $x\in\pp_i^\circ$, then $\big(\mu_j-\mu_i\big)^T x > d_{ji}$. Therefore, for any $x\in\pp_i^\circ$,
\begin{equation}
\big(\mu_i-\mu_j\big)^T x  \, =\, -\big(\mu_j-\mu_i\big)^T x \,<\,- d_{ji}  \,\le\, d_{ij},
\end{equation}
where the last inequality is due to \eqref{eq:mui -muj xi - xj le 0}. 
\comm{I cannot follow the previous two sentences. How about the following.}}
Suppose again that $i,j\in \I$, with $j\neq i$.
If $x$ belongs to the interior of $\pp_i$, none of the inequality constraints in the definition of $\pp_i$ can be satisfied with equality, and therefore,
\begin{equation}\label{eq:mux<d}
\big(\mu_i-\mu_j\big)^T x  \,<\, d_{ij}.
\end{equation}
On the other hand, from \eqref{eq:def bij for degion}, if $x\in \pp_j$, then $\big(\mu_i-\mu_j\big)^T x  \ge d_{ij}$, which contradicts \eqref{eq:mux<d}. Therefore, 
\begin{equation}\label{eqref:disjoint pp}
\pp_i^\circ \bigcap \pp_j = \emptyset.
\end{equation}
We conclude that the polyhedra $\pp_i$ have disjoint interiors and comprise a polyhedral tiling. 

We now show that $\pp_i\subseteq\degion_i$.
We have
\begin{equation}
\pp_i^\circ  \,\subseteq\, \R^n \backslash   \bigcup_{j\ne i,\, j\in I} \pp_j 
\,\subseteq\,  \R^n \backslash   \bigcup_{j\ne i,\, j\in I} \degion_j  
\,\subseteq\, \degion_i,
\end{equation}
where the inclusions follow from \eqref{eqref:disjoint pp}, \eqref{eq:degion in pp}, and \eqref{eq:G2}, respectively.
Therefore, 
\begin{equation}\label{eq:pp in degion}
\pp_i\, =\, \textrm{closure} (\pp_i^\circ)  \,\subseteq\, \textrm{closure} (\degion_i) \,=\, \degion_i.
\end{equation}
It then follows from \eqref{eq:pp in degion} and \eqref{eq:degion in pp} that $\degion_i=\pp_i$, for all $i$, and the regions $\degion_i$ thereby comprise a polyhedral tiling.

\end{document}

\oli{****** REMOVE FROM HERE ON *******}

It remains to show that the regions $\degion_i$ are polyhedra. Towards this goal, we first show that they are convex.
Consider a pair  of points $x,y\in\degion_i^\circ$,  and let $z=x+\alpha(y-x)$, for some $\alpha\in(0,1)$. 
In order to draw a contradiction, suppose that $z\not\in\degion_i$ so that $z\in\degion_j$ for some $j\ne i$.
Consider an $\epsilon>0$ such that the $\epsilon$-neighbourhoods $\ball_\epsilon(x)$ and $\ball_\epsilon(y)$ of $x$ and $y$ are both contained in $\degion_i^\circ$. 
Since $z\in\degion_j$,  \eqref{eq:di = its closure} implies that $\ball_\epsilon(z)\bigcap \degion_j^\circ$ is nonempty.
Consider a $z'\in \ball_\epsilon(z)\bigcap \degion_j^\circ$, and let $v =z'-z$, $x'=x+v$, and $y'=y+v$.
Since $\lVert{v}\rVert_2\le \epsilon$, $x'$ and $y'$ are in $\ball_\epsilon(x)$ and $\ball_\epsilon(y)$, respectively. Hence,  $x',y'\in\degion_i^\circ$. 
Moreover, $z' =x'+\alpha(y'-x')$. Then,
\begin{equation*}
y'-z' =  -\frac{1-\alpha}{\alpha}\big(x'-z'\big).
\end{equation*}
Using (\ref{eq:mui -muj xi - xj le 0}) in the first and the last inequalities below, we obtain 
\begin{equation*}
0\,\ge\, \big(\mu_i-\mu_j\big)^T\big(y'-z'\big) \,= \,  -\frac{1-\alpha}{\alpha}\big(\mu_i-\mu_j\big)^T\big(x'-z'\big) \,\ge\, 0.
\end{equation*}
Hence, 
\begin{equation}\label{eq:mu x z =0}
\big(\mu_i-\mu_j\big)^T\big(y'-z'\big) =0.
\end{equation} 

Since, $z'\in \degion_j^\circ$, there is a $\delta>0$ such that $z''\triangleq z'-\delta\big(\mu_i-\mu_j\big)$ is in $\degion_j$. 
Therefore, 
\begin{equation*}
\begin{split}
0&\ge \big(\mu_i-\mu_j\big)^T\big(y'-z''\big) \\
& = \big(\mu_i-\mu_j\big)^T\big(y'-z' + \delta(\mu_i-\mu_j)\big) \\
& = 0+\delta \Ltwo{\mu_i-\mu_j}^2\\
& >0,
\end{split}
\end{equation*}
where the first inequality is due to \eqref{eq:mui -muj xi - xj le 0} and the second equality is due to \eqref{eq:mu x z =0}.
This is a contradiction. It follows that $z$ cannot be outside $\degion_i$. 
So far, we have shown that if $x,y\in\degion_i^\circ$, then the line segment between them is contained in $\degion_i$.
We now extend this to the case where  $x,y\in\degion_i$.

It follows from \eqref{eq:di = its closure} that for any $x,y\in\degion_i$, there exist two sequences $\{x_k\}$ and $\{y_k\}$ of points in the interior of $\degion_i$ that converge to $x$ and $y$, respectively. 
Our earlier argument shows that for every $k$, the line segment between $x_k$ and $y_k$ lies in $\degion_i$.
It follows that for any point $z=x+\alpha(y-x)$ on the line segment connecting $x$ and $y$, there is a sequence $z_k=x_k+\alpha(y_k-x_k)$ of points in $\degion_i$ converging to $z$.
Since $\degion_i$ is closed, we conclude that  $z\in\degion_i$, and therefore $\degion_i$ is a convex set.

Finally, because the regions $\degion_i$ are convex, have disjoint interiors, \comm{important, right?}\oli{Absolutely} and taken together cover all of $\R^n$, it is not hard to see that each one of them must be a polyhedron. \comm{[We consider this to be visually obvious, but I do not see an easy proof. The disjoint interior property somehow needs to come into play, otherwise this fact is not true. I think we should say more here]} 
\oli{a proof idea: the regions are convex, so are their interiors. Therefore, there is a hyperplane separating each  $\degion_i^\circ$ and  $\degion_j^\circ$. We then try to express $\degion_i$ as the intersection of the half-spaces defined in terms of these hyperplanes, and show that $\degion_i$ is a polyhedron. It involves arguments similar to \eqref{eqref:disjoint pp}--\eqref{eq:pp in degion}.}
Thus, the regions $\degion_i$ are polyhedra that satisfy all of the conditions in Definition \ref{eq:def:tiling}, and therefore constitute a polyhedral tiling.
\hfill \qed

\oli{TO BE REMOVED.}
As in \eqref{eq:def of phi}, we consider the function defined by
$$\Phi(x) \triangleq -\mu_i^T x +b_i,\quad\forall x\in \degion_i,\quad i=1,\ldots, m.$$
We first show that $\Phit$ is well-defined, i.e., that it is defined consistently in the intersection of two regions. 
Consider two regions $\degion_i$ and $\degion_j$, with $i\neq j$, and consider some $x\in\degion_i \bigcap \degion_j$. Then,
\begin{equation}
\begin{split}
-\mu_j^T x +b_j \, & = \,  -\mu_i^T x  + \big(\mu_i -\mu_j  \big)^T x + b_i -  \big(b_i -b_j  \big)\\
& = \, -\mu_i^T x +b_i  +\Big[  \big(\mu_i -\mu_j  \big)^T x - b_{ij} \Big]\\
& = \, -\mu_i^T x +b_i,
\end{split}
\end{equation}
where the last two equalities are due to  \eqref{eq:bij is bi-bj} and \eqref{eq:mu dif x at boundary equals bij}, respectively. 
This shows that $\Phit$ is well-defined.

We now prove the convexity of $\Phit$.
Consider an $x\in \R^n$ and let $I =\{i \mid x\in\degion_i\}$ be the index-set of all regions that contain $x$. Thus, $x\in\bigcap_{i\in I}\degion_i$, and if $j\not\in I$, then $x\not\in \degion_j$. 
Since the regions are closed, there exists an $\epsilon>0$ such that  the $\epsilon$-neighbourhood $\ball_\epsilon(x)$ of $x$ does not intersect with $\bigcup_{j\not\in I} \degion_j$. 
Let $x_1$ and $x_2$ be two points in $\ball_\epsilon(x)$. 
Then, $({x_1+x_2})/{2}$ is also in $\ball_\epsilon(x)$.
Hence, there are indices $i_1,i_2,i_3\in I$,  such that $x_1\in\degion_{i_1}$, $x_2\in\degion_{i_2}$, and $({x_1+x_2})/{2}\in\degion_{i_3}$. 
Note that $i_1$, $i_2$, and $i_3$ are not necessarily distinct.
Since $x_1\in \degion_{i_1}$, we have:
\begin{equation}
\begin{split}
b_{i_1} - b_{i_3} \, = b_{i_1 i_3} \,\ge\,   \big(\mu_{i_1} -\mu_{i_3}  \big)^T x_1,
\end{split}
\end{equation}
where the equality is due to \eqref{eq:bij is bi-bj} and the inequality is from the definition of $b_{i_1 i_3}$ in \eqref{eq:def of bij in proof of th pwc}. 
Using also the same argument with $x_2$ in the place of $x_1$, we have 
\begin{equation} \label{eq:inequality of i1, i2, i3}
\begin{split}
-\mu_{i_3}^T x_1 +b_{i_3} \, &\le\, -\mu_{i_1}^T x_1 +b_{i_1},\\
-\mu_{i_3}^T x_2 +b_{i_3} \, &\le\, -\mu_{i_2}^T x_2 +b_{i_2}.
\end{split}
\end{equation}
It follows that
\begin{equation}
\begin{split}
\Phit\left(\frac{x_1+x_2}{2}\right) &= -\mu_{i_3}^T \left(\frac{x_1+x_2}{2}\right) +b_{i_3}\\
&= \frac12\Big(-\mu_{i_3}^T x_1 +b_{i_3}\Big) \, + \, \frac12\Big(-\mu_{i_3}^T x_2 +b_{i_3}\Big)\\
&\le \frac12\Big(-\mu_{i_1}^T x_1 +b_{i_1}\Big) \, + \, \frac12\Big(-\mu_{i_2}^T x_2 +b_{i_2}\Big)\\
&= \frac{\Phit(x_1) +  \Phit(x_2)}{2}.
\end{split}
\end{equation}
where the inequality is due to \eqref{eq:inequality of i1, i2, i3}. Hence, for any $x\in\R_n$ there is an open neighbourhood of $x$ over which $\Phit$ is convex. This actually implies that $\Phit$ is a convex function over the entire $\R^n$ \cite{LiY10}.

For the last part of the lemma,
since $\Phit$ is convex, it has a subdifferential at every point of $\R^n$. 
Moreover, for any $j$ and  any  $y\in\degion_j^\circ$, $\Phit$ is differentiable at $y$, with gradient equal to $-\mu_j$. 
Hence, $\Phit(y) -\mu_j^T \big(x-y\big)$ as a function of $x$, is a supporting hyperplane for $\Phit$.
Consider a region $\degion_i$ (not necessarily adjacent to $\degion_j$), and let $x\in\degion_i$. Then, 
\begin{equation}
\begin{split}
-\mu_i^T x +b_i \, & = \,\Phit(x) \\
&\ge\, \Phit(y) -\mu_j^T \big(x-y\big) \\
&=\, -\mu_j^T x +b_j,
\end{split}
\end{equation}
where the inequality follows  from the definition of a supporting hyperplane, and 
equalities follow from the definition of $\Phit$ in \eqref{eq:def of phi}.
As a result, $\Phit(x)=\max_i  \big(- \mu_i^Tx+b_i\big)$, for all $x\in\R^n$.
This completes the proof of the lemma.